\documentclass[superscriptaddress,aps,pra,twocolumn,showpacs,nofootinbib,longbibliography]{revtex4-1}
\usepackage{etex}
\usepackage{amsmath,amssymb,amsthm}
\usepackage[colorlinks=true,citecolor=blue,urlcolor=blue]{hyperref}
\usepackage[pdftex]{graphicx}
\usepackage{times,txfonts}
\usepackage{braket}
\usepackage{color}
\usepackage{natbib}

\newcommand{\be}{\begin{equation}}
\newcommand{\ee}{\end{equation}}
\newcommand{\ba}{\begin{eqnarray}}
\newcommand{\ea}{\end{eqnarray}}
\newcommand{\ketbra}[2]{|#1\rangle \langle #2|}
\newcommand{\tr}{\operatorname{Tr}}

\newcommand{\etal}{{\it{et. al. }}}

\newtheorem{observation}{Observation}

\newtheorem{proposition}{Proposition}

\newtheorem{Lemma}{Lemma}

\newtheorem{scenario}{Scenario}
\begin{document}
\title{Tripartite entanglement detection through tripartite quantum steering in one-sided and two-sided 
device-independent scenarios}   

 \author{C. Jebaratnam}
 \email{jebarathinam@bose.res.in}
\affiliation{S. N. Bose National Centre for Basic Sciences, Block JD, Sector III, Salt Lake, Kolkata 700 098, India}   
   
\author{Debarshi Das}
\email{debarshidas@jcbose.ac.in}
\affiliation{Centre for Astroparticle Physics and Space Science (CAPSS), Bose Institute, Block EN, Sector V, Salt Lake, Kolkata 700 091, India}
 
\author{Arup Roy}
\affiliation{Physics and Applied Mathematics Unit, Indian Statistical Institute, 203 B. T. Road, Kolkata 700108, India.}

\author{Amit Mukherjee}
\affiliation{Optics $\&$ Quantum Information Group, The Institute of Mathematical Sciences,
HBNI, C.I.T Campus, Tharamani, Chennai 600 113, India}

\author{Some Sankar Bhattacharya}
\affiliation{Physics and Applied Mathematics Unit, Indian Statistical Institute, 203 B. T. Road, Kolkata 700108, India.}

\author{Bihalan Bhattacharya}
\affiliation{S. N. Bose National Centre for Basic Sciences, Block JD, Sector III, Salt Lake, Kolkata 700 098, India} 

\author{Alberto Riccardi}
\affiliation{Dip. Fisica and INFN Sez. Pavia, University of Pavia, via Bassi 6, I-27100 Pavia, Italy}

\author{Debasis  Sarkar}
\email{dsappmath@caluniv.ac.in}
\affiliation{Department of Applied Mathematics, University of Calcutta, 92, A.P.C. Road, Kolkata-700009, India.}

\date{\today}
\begin{abstract}
In the present work, we study tripartite quantum steering
of quantum correlations arising from two local dichotomic measurements on each side in the two types of partially device-independent scenarios: 
$1$-sided device-independent scenario where one of the parties performs untrusted measurements while the other two parties perform
trusted measurements and $2$-sided device-independent scenario where one of the parties performs trusted measurements while the other two parties perform untrusted measurements.
We demonstrate that tripartite steering in the $2$-sided device-independent scenario is weaker than 
tripartite steering in the $1$-sided device-independent scenario by using two families of quantum correlations. That is
these two families of quantum correlations in the $2$-sided device-independent framework detect tripartite entanglement through tripartite steering for a larger region than that in the $1$-sided device-independent framework.  It is shown that tripartite steering in the $2$-sided device-independent scenario implies the presence of genuine tripartite entanglement of $2\times 2 \times 2$ quantum system, even if the correlation does not exhibit genuine nonlocality or genuine steering. 
\end{abstract}

\pacs{03.65.Ud, 03.67.Mn, 03.65.Ta}

\maketitle

\section{Introduction}
 Multipartite entanglement is a resource for quantum information and computation when quantum networks  are considered. Therefore, detecting
 the presence of multipartite entanglement in quantum networks is an important problem in quantum information science. In particular,
 a genuinely multipartite entangled state (which is not separable with respect to any partitions)   \cite{guhne} is important  not only for quantum foundational research  but also in various  quantum information processing tasks, for example, in the context of extreme spin squeezing \cite{Sor}, high sensitive metrology tasks \cite{Hyl,toth}. Generation and detection of this 
kind of resource state is found to be difficult as the detection process deals with tomography and evaluation via constructing entanglement witness which require precise experimental control over the system subjected to measurements. But there is an alternative way to certify the presence of entanglement by observing the violation of Bell inequality \cite{bell} as entanglement is necessary ingredient to observe the violation. Motivated by this fact, a number of  multipartite Bell type inequalities \cite{SI, Nag, mb1, mb2, B, mb3} have been proposed to detect the genuine multipartite entanglement. To be specific, if the value of any Bell expression, in a Bell experiment, exceeds the value of the same expression obtained due to measurements on biseparable quantum states, then the presence of  genuine entanglement is guaranteed. This kind of research was first initiated in \cite{SI,Nag} but it took a shape by Bancal et. al. \cite{B} where they have constructed  device-independent entanglement witness (DIEW) of genuine multipartite entanglement for such Bell expressions.
 
The concept of quantum steering was first pointed out by Schrodinger \cite{scro} in the context of Einstein-Podolsky-Rosen
paradox (EPR) \cite{EPR}, which has no classical analogue.
Quantum steering as pointed out by Schrodinger occurs
when one of the two spatially separated observers
prepares genuinely different ensembles of quantum states for the other distant observer  by performing suitable quantum measurements 
on her/his side. 
Wiseman \etal \cite{steer} gave the formal definition of quantum steering from the foundational as well as quantum information perspective.
Quantum steering is certified by the violation of steering inequalities.
A number of steering inequalities have been proposed to observe steering \cite{ZHC16}. 
Violation of such steering inequalities certify the presence of entanglement in a one-sided device-independent way.

 In  Refs. \cite{UFNL,stm2}, the notion of steering has been generalized for multipartite scenarios  and multipartite steering 
 inequalities have been derived to detect multipartite entanglement in asymmetric networks where some of the parties' measurements are
 trusted while the other parties' measurements are uncharacterized.
 These studies did not examine genuine multipartite steering, in which the nonlocality, in the form of steering, is necessarily shared among all observers. Genuine multipartite steering has been proposed in \cite{stm3, stm4}. 
  In Refs. \cite{cava, stm6}, genuine tripartite steering inequalities have been derived to detect genuine tripartite entanglement 
  in a partially device-independent way.
 Characterization of multipartite quantum steering through semidefinite programming has also been performed \cite{cava, stm6, stm7}.
    
In the present work, we study  tripartite steering (which is analogous to standard Bell nonlocality) and genuine tripartite steering of quantum correlations arising from two local measurements on each side in the two types of partially device-independent scenarios: 
$1$-sided device-independent scenario where one of the parties performs untrusted measurements while the other two parties perform
trusted measurements and $2$-sided device-independent scenario where one of the parties performs trusted measurements while the other two parties perform untrusted measurements. 

In the $1$-sided device-independent framework, we study tripartite steering and genuine tripartite steering of two families of quantum correlations in the following scenarios: one of the parties performs two dichotomic black-box measurements and the other
two parties perform incompatible qubit measurements that
demonstrate Bell nonlocality \cite{CHS+69} in one of the types
or perform incompatible measurements that demonstrate EPR
steering without Bell nonlocality \cite{UFNL,jeba} in the other type. The first family of quantum correlation considered by us is called Svetlichny family as it can be obtained by performing the non-commuting measurements that lead to the violation of Svetlichny inequality and it violates Svetlichny inequality in a particular region. On the other hand, the second family of quantum correlation considered by us is called Mermin family as it can be obtained by performing the non-commuting measurements that
lead to the violation of Mermin inequality and it violates Mermin inequality in a particular region, but it does not violate Svetlichny inequality in any region. We demonstrate in which range these two families detect tripartite and genuine tripartite steering  in the aforementioned $1$SDI scenarios, respectively. 

We also explore in which range the Svetlichny family and Mermin family detect tripartite steering and genuine tripartite
steering in the $2$-sided device-independent framework. 

Our study demonstrates that tripartite steering in the $2$-sided device-independent framework is weaker than 
tripartite steering in the $1$-sided device-independent framework. In other words, tripartite steering in the context of 
$2$-sided device-independent framework detect tripartite entanglement for a larger region than that in the context of 
$1$-sided device-independent framework.
We demonstrate that tripartite steering in the $2$-sided device-independent scenario implies the presence of genuine tripartite entanglement of $2\times 2 \times 2$ quantum system, even if the correlation does not exhibit genuine nonlocality or genuine steering. 

The plan of the paper is as follows. In Sections \ref{Gsteer} and \ref{tes}
the fundamental ideas of tripartite nonlocality and that of tripartite
EPR steering in $1$-sided device-independent scenario as well as in $2$-sided device-independent scenario, respectively, are presented. In Sections
\ref{sts} and \ref{mts} tripartite steering and genuine tripartite steering in $1$-sided device-independent scenario as well as in $2$-sided device-independent scenario for Svetlichny family and Mermin family, respectively,
are discussed. Certifying genuine tripartite entanglement
of $2 \times 2 \times 2$ quantum system through tripartite steering inequality in $2$-sided device-independent scenario is also demonstrated in Sections
\ref{sts} and \ref{mts}. Finally, in the concluding Section \ref{Cnc}, we
discuss summary of the results obtained.

\section{Tripartite nonlocality}\label{Gsteer}
We consider a tripartite Bell scenario where three spatially separated parties, Alice, Bob and Charlie, 
perform two dichotomic measurements on their subsystems.
The correlation is described by the conditional probability distributions: $P(abc|A_xB_yC_z)$, here $x,y,z\in\{0,1\}$ and $a,b,c\in\{0,1\}$. 
The correlation exhibits standard tripartite nonlocality (i.e., Bell nonlocality) if it cannot be explained by a fully
local hidden variable (LHV) model,

\be
P(abc|A_xB_yC_z)=\sum_{\lambda} p_\lambda P_\lambda(a|A_x)P_\lambda(b|B_y)P_\lambda(c|C_z), \label{FLHV}
\ee

for some hidden variable $\lambda$ with probability distribution $p_\lambda$; $\sum_{\lambda} p_\lambda = 1$. The Mermin inequality (MI) \cite{mermin},
\begin{align}
\braket{M}:=\braket{A_0B_0C_1+A_0B_1C_0+A_1B_0C_0-A_1B_1C_1}_{LHV}\le2, \label{MI0}
\end{align}
is a Bell-type inequality whose violation implies that the correlation cannot be explained by a
fully local hidden variable  model as in Eq. (\ref{FLHV}). Here $\braket{A_xB_yC_z}=\sum_{abc}(-1)^{a \oplus b \oplus c} P(abc|A_xB_yC_z)$.

If a correlation violates a MI, it does not necessarily imply that it exhibits genuine tripartite nonlocality \cite{SI,B}.
In Ref. \cite{SI}, Svetlichny introduced the strongest form of genuine tripartite nonlocality 
(see Ref. \cite{B} for the other two forms 
of genuine nonlocality).
A correlation exhibits Svetlichny nonlocality if it cannot be explained by a hybrid nonlocal-LHV (NLHV) model,
\begin{align}
&P(abc|A_xB_yC_z)\!=\!\sum_\lambda p_\lambda P_\lambda(a|A_x)P_\lambda(bc|B_yC_z)+\nonumber \\
&\!\sum_\lambda q_\lambda P_\lambda(ac|A_xC_z)P_\lambda(b|B_y)\!+\!\sum_\lambda r_\lambda P_\lambda(ab|A_xB_y)P_\lambda(c|C_z), \label{HNLHV}
\end{align}
with $\sum_\lambda p_\lambda+\sum_\lambda q_\lambda+\sum_\lambda r_\lambda=1$. The bipartite probability distributions in this decomposition can have arbitrary nonlocality.

Svetlichny derived Bell-type inequalities to detect the strongest form of genuine  tripartite nonlocality \cite{SI}. For instance, one of the 
Svetlichny inequalities (SI) reads,
\ba
\braket{S}&:=&\braket{A_0B_0C_1+A_0B_1C_0+A_1B_0C_0-A_1B_1C_1}\nonumber \\
&&+\braket{A_0B_1C_1+A_1B_0C_1+A_1B_1C_0-A_0B_0C_0}\le4. \label{SI1}
\ea
Quantum correlations violate the SI up to $4\sqrt{2}$. A Greenberger-Horne-Zeilinger (GHZ)
state \cite{GHZ}  gives rise to the maximal violation of the SI for a different choice of measurements which do not demonstrate GHZ paradox \cite{mermin2}.

In the seminal paper \cite{mermin}, the MI was derived to demonstrate standard  tripartite nonlocality of 
three-qubit correlations arising from the genuinely entangled states. 
For this purpose, noncommuting measurements
that do not demonstrate Svetlichny nonlocality was used.
Note that when a Greenberger-Horne-Zeilinger (GHZ)
state \cite{GHZ} maximally violates the MI, the measurements
that give rise to it exhibit the GHZ paradox \cite{mermin2}.
\section{Definitions of tripartite EPR steering}\label{tes}
Before we define tripartite EPR steering, let us review the definition of bipartite EPR steering in the following $1$-sided device-independent
scenario. Two spatially separated parties, Alice (who is the trusted party) and Bob (who is the untrusted party)
share an unknown bipartite system described by the density matrix $\rho_{AB}$ in  $\mathbb{C}^{d_A} \otimes \mathbb{C}^{d_{B}}$ with 
the  dimension of Alice $d_A$ is known and the dimension of Bob $d_B$ is unknown. On this shared state, Bob performs black-box 
measurements (positive operator valued measurement, or in short, POVM) with the measurement operators $\{M_{b|y}\}_{b,y}$  ($M_{b|y} \geq 0$ $\forall b, y$; $\sum_{b} M_{b|y} = \mathbb{I}$ $\forall y$), here $y$ and $b$ denote the measurement choices and measurement outcomes 
of Bob, respectively, to prepare the set of conditional states on Alice's side.  The above steering scenario is characterized by the set 
of unnormalized conditional states on Alice's side $\{\sigma^{A}_{b|y}\}_{b,y}$, which is called an assemblage. Each element in this 
assemblage is given by $\sigma^{A}_{b|y}=\tr_B (\openone \otimes  M_{b|y} \rho_{AB})$.

Wiseman \etal \cite{steer} provided an operational definition of steering. According to this definition, Bob's measurements in the above scenario
demonstrates steerability to Alice iff the assemblage certifies entanglement. The assemblage which does not certify entanglement, i.e., does 
not imply steerability from Bob to Alice has a local hidden state (LHS) model as follows: for all $b$, $y$, each
element $\sigma_{b|y}^A$ in the assemblage admits the following decomposition:
\be
\sigma_{b|y}^A=\sum_\lambda q_\lambda P_\lambda(b|B_y) \rho^\lambda_A,
\ee
where $\lambda$ denotes classical random variable which occurs with probability 
$q_\lambda$; $\sum_{\lambda} q_\lambda = 1$; $P_\lambda(b|B_y)$ are some conditional probability distributions and the quantum states $\rho^\lambda_A$
are called local hidden states which satisfy $ \rho^\lambda_A \ge0$ and
$\tr \rho^\lambda_A=1$. Suppose Alice performs  positive operator valued measurements (POVM)  with measurement operators $\{M_{a|x}\}_{a,x}$ ($M_{a|x} \geq 0$ $\forall a, x$; $\sum_{a} M_{a|x} = \mathbb{I}$ $\forall x$)  on the
assemblage to detect steerability through the violation of a steering inequality. Then the scenario is characterized by the 
 set of conditional probability distributions,
 \be\label{biqs}
 P(ab|A_xB_y)=\tr\left( M_{a|x}    \sigma^A_{b|y} \right).
 \ee
 The above quantum correlation  $P(ab|A_xB_y)$ detects  steerability if and only if it cannot
 be explained by a  LHS-LHV  model of the form,
 \be
 P(ab|A_xB_y)=\sum_\lambda q_\lambda P(a|A_x,\rho_A^\lambda)P_\lambda(b|B_y) \hspace{0.5cm} \forall a, x, b, y, 
\ee
with $\sum_\lambda q_\lambda = 1$. Here $P(a|A_x,\rho_A^\lambda)$ 
are the distributions arising from the local hidden states 
$\rho_A^\lambda$. 

On the other hand, the quantum correlation $P(ab|A_xB_y)$ demonstrates Bell nonlocality if and only if
it cannot be explained by a  LHV-LHV  model of the form,
 \be
 P(ab|A_xB_y)=\sum_\lambda q_\lambda P_\lambda(a|A_x)P_\lambda(b|B_y) \hspace{0.5cm} \forall a, x, b, y, 
\ee
with $\sum_\lambda q_\lambda = 1$. The quantum correlation that does not have a LHV-LHV model also implies steering, on the other hand,
the quantum correlation that does not have a LHS-LHV model may not imply Bell nonlocality since certain local correlations may also
detect steering in the given $1$-sided device-independent scenario.

Let us now focus on the definition of tripartite steering. 
In the tripartite scenario, there are two types of partially device-independent scenarios 
where one can generalize bipartite EPR steering. These two scenarios
are called $1$-sided device-independent ($1$SDI) and $2$-sided device-independent ($2$SDI)  scenarios \cite{stm7}.
\subsection{Tripartite steering in $1$SDI scenario}
We will consider the following 
$1$-sided device-independent ($1$SDI) scenario  (depicted in FIG. \ref{ss}): 
Three spatially separated parties share an unknown tripartite quantum state $\rho^{ABC}$ in 
$\mathbb{C}^2 \otimes \mathbb{C}^{2} \otimes \mathbb{C}^{d}$  on which
Charlie performs black-box measurements (POVMs).
Suppose $M_{c|z}$ denote the unknown measurement operators of Charlie ($M_{c|z} \geq 0$ $\forall c, z$; $\sum_{c} M_{c|z} = \mathbb{I}$ $\forall z$). Then, the scenario is characterized by
 the set of (unnormalized) conditional two-qubit states on Alice and Bob's side $\{ \sigma^{AB}_{c|z} \}_{c,z}$, each element of which is given  as follows:
\be
\sigma^{AB}_{c|z}=\tr_C (\openone \otimes \openone \otimes M_{c|z} \rho^{ABC}).
\label{ass}
\ee
Alice and Bob can do local state tomography to determine the above assemblage prepared by Charlie.

 Analogous to the operational definition of bipartite EPR steering, we will now provide the operational 
 definition of tripartite steering in the above $1$SDI scenario.
 The assemblage $\sigma^{AB}_{c|z}$ given by Eq. (\ref{ass}) is called steerable   if\\
 
 \begin{figure}
\centering
\includegraphics[width=0.45\textwidth]{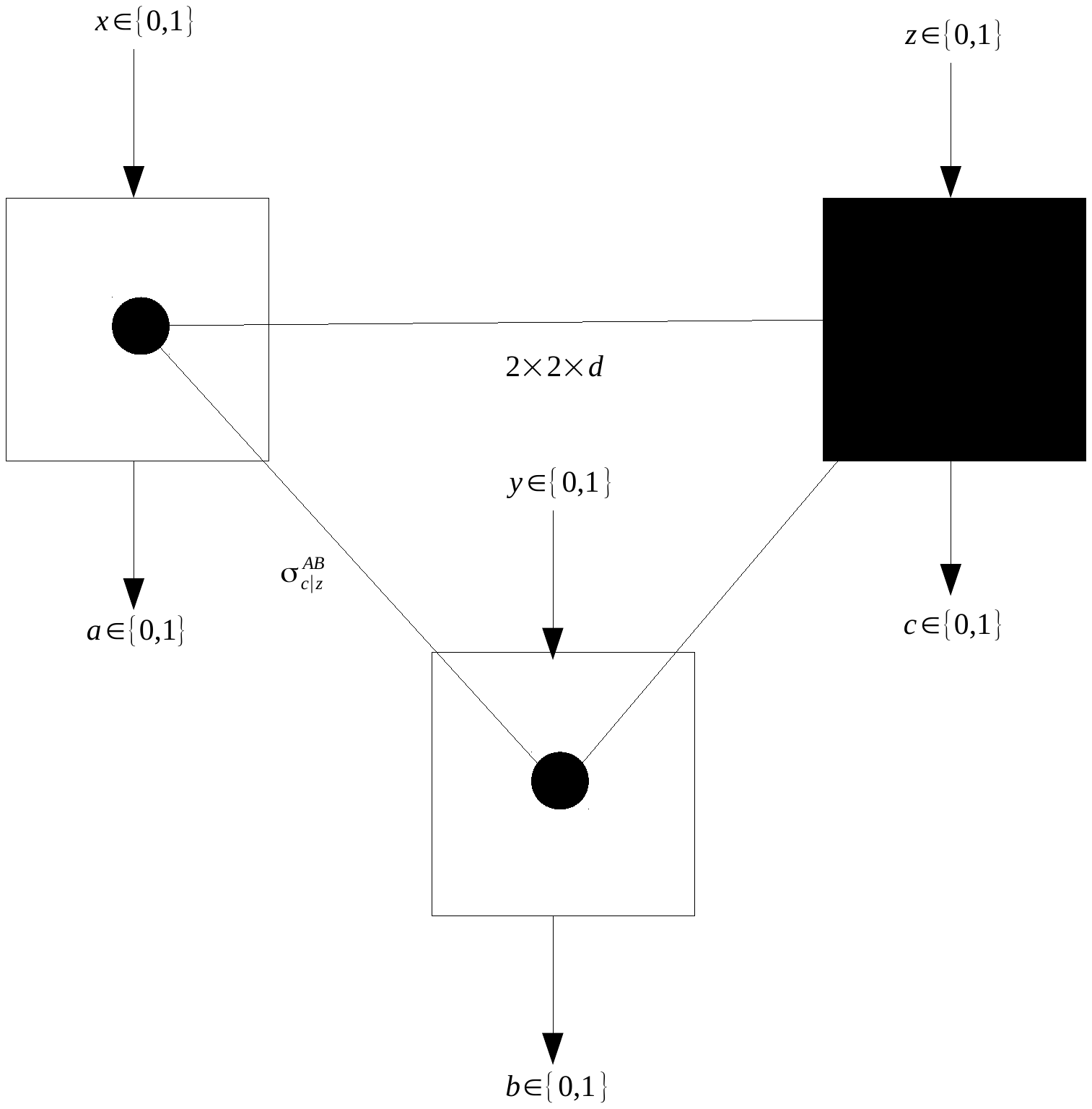}
\caption{Schematic diagram of our $1$SDI tripartite steering scenario: Alice, Bob and Charlie share a $2 \times 2 \times d$ quantum state. 
Charlie performs two dichotomic black-box measurements to produce assemblages $\sigma^{AB}_{c|z}$ (\ref{ass}) on Alice and Bob's side. 
On this assemblage, Alice and Bob perform two dichotomic measurements producing the joint probability distributions $P(abc|A_xB_yC_z)$ (here
$a,b,c$ denotes the outcomes and $x,y,z$ denotes the measurement choices)
to check whether Charlie demonstrates steerability to them through the violation of a steering inequality by $P(abc|A_xB_yC_z)$. In case of the scenario considered in Section \ref{sts}, Alice and Bob perform incompatible  qubit measurements that demonstrate
Bell nonlocality of certain two-qubit states \cite{CHS+69}; for instance, the singlet state.  On the other hand,  in case of the scenario considered in Section \ref{mts}, they perform incompatible qubit measurements that demonstrate EPR steering without Bell nonlocality of certain two-qubit states \cite{UFNL,jeba}; for instance, the singlet state.}\label{ss}
\end{figure}
 
  i)  the assemblage prepared on Alice and Bob's side cannot be reproduced
 by a fully separable state,  in $\mathbb{C}^2 \otimes \mathbb{C}^{2} \otimes \mathbb{C}^{d}$, of the form,
 \be
 \rho^{ABC}=\sum_\lambda p_\lambda \rho_A^\lambda \otimes \rho_B^\lambda \otimes \rho_C^\lambda,
 \ee
 with $\sum_\lambda p_\lambda = 1$; and\\
 
 ii)  entanglement between Charlie and Alice-Bob is detected.\\
 
 In the genuine steering scenario, Charlie 
 demonstrates genuine  tripartite  EPR steering 
 to Alice and Bob if the assemblage prepared on Alice and Bob's side cannot be reproduced
 by a biseparable state  in $\mathbb{C}^2 \otimes \mathbb{C}^{2} \otimes \mathbb{C}^{d}$,
 \be \label{bisep}
 \rho^{ABC}=\sum_\lambda p_\lambda \rho_A^\lambda \otimes \rho_{BC}^\lambda +
 \sum_\lambda q_\lambda \rho_{AC}^\lambda \otimes \rho_{B}^{\lambda} +\sum_\lambda r_\lambda \rho_{AB}^\lambda \otimes \rho_{C}^\lambda,
 \ee
 with $\sum_\lambda p_\lambda+\sum_\lambda q_\lambda + \sum_\lambda r_\lambda = 1$.

 Suppose in our tripartite $1$SDI scenario, the trusted parties Alice and Bob perform POVMs having elements $\{ M_{a|x} \}_{a,x}$ and $\{ M_{b|y} \}_{b,y}$, respectively, for detecting tripartite steering. Here $M_{a|x} \geq 0$ $\forall a, x$; $\sum_{a} M_{a|x} = \mathbb{I}$ $\forall x$; and $M_{b|y} \geq 0$ $\forall b, y$; $\sum_{b} M_{b|y} = \mathbb{I}$ $\forall y$. Then the scenario is characterized by the 
 set of conditional probability distributions,
 \be
 P(abc|A_xB_yC_z)=\tr\left( M_{a|x} \otimes M_{b|y}   \sigma^{AB}_{c|z} \right),
 \ee
 where $M_{a|x}$ and  $M_{b|y}$ are the measurement operators of Alice and Bob, respectively.
  Suppose the above quantum correlation  $P(abc|A_xB_yC_z)$ detects tripartite steerability. Then, it cannot
 be explained by  a fully LHS-LHV  model of the form,
 \be
 P(abc|A_xB_yC_z)=\sum_\lambda q_\lambda P(a|A_x,\rho_A^\lambda)P(b|B_y,\rho_B^\lambda)P_\lambda(c|C_z), \label{ALHS}
\ee
with $\sum_\lambda q_\lambda = 1$. Here $P(a|A_x,\rho_A^\lambda)$ 
and $P(b|B_y,\rho_B^\lambda)$ are the distributions arising from the local hidden states 
$\rho_A^\lambda$ and $\rho_B^\lambda$ which are in  $\mathbb{C}^{2}$, respectively. 
It should be noted that if a quantum correlation does not have an fully LHS-LHV model (\ref{ALHS}), then it does not necessarily 
imply that it detects tripartite steering from Charlie to Alice-Bob \cite{UFNL}.  The correlation $P(abc|A_xB_yC_z)$ detects tripartite steerability  if and only if \\
i) $P(abc|A_xB_yC_z)$ does not have a fully LHS-LHV  model as in Eq. (\ref{ALHS}); and \\
ii) entanglement between Charlie and Alice-Bob is detected.\\

The quantum correlation $P(abc|A_xB_yC_z)$ that detects  tripartite steering also
detects genuine tripartite steering if it cannot be explained by the following steering LHS-LHV (StLHS) model:
\begin{align}
P(abc|A_xB_yC_z)&=\!\sum_\lambda r_\lambda P(ab|A_xB_y,\rho^\lambda_{AB})P_\lambda(c|C_z) \nonumber \\
\!&+\!\sum_\lambda p_\lambda P(a|A_x,\rho^\lambda_A)P^Q_\lambda(bc|B_yC_z)\nonumber \\
&+\!\sum_\lambda q_\lambda P(b|B_y, \rho^\lambda_B) P^Q_\lambda(ac|A_xC_z), \label{NLHS}
\end{align}
with $\sum_\lambda p_\lambda+\sum_\lambda q_\lambda+\sum_\lambda r_\lambda=1$.  Here,
$P(a|A_x,\rho^\lambda_A)$ and $P(b|B_y, \rho^\lambda_B)$ are the distributions arising from
the qubit states $\rho^\lambda_A$ and $\rho^\lambda_B$ on Alice's side and Bob's side, respectively, $P_\lambda(c|C_z)$ is the distribution on Charlie's side arising from black-box measurements performed on a $d$ dimensional quantum state and $P^Q_\lambda(bc|B_yC_z)$ and $P^Q_\lambda(ac|A_xC_z)$ are the distributions that can be produced from a $2 \times d$ quantum states; and $P(ab|A_xB_y,\rho^\lambda_{AB})$ can be reproduced by two-qubit quantum states $\rho^\lambda_{AB}$ shared between Alice and Bob. Note that in the model given in Eq. (\ref{NLHS}), the bipartite distributions  at each $\lambda$ level may have Bell nonlocality or steering without Bell nonlocality \cite{UFNL,jeba}. Equivalently, the quantum correlation that detects genuine  tripartite steering cannot be reproduced by a biseparable
state in $\mathbb{C}^{2} \otimes \mathbb{C}^{2}\otimes \mathbb{C}^{d}$.

\subsection{Tripartite steering in $2$SDI scenario} \label{2SDID}

 \begin{figure}
\centering
\includegraphics[width=0.45\textwidth]{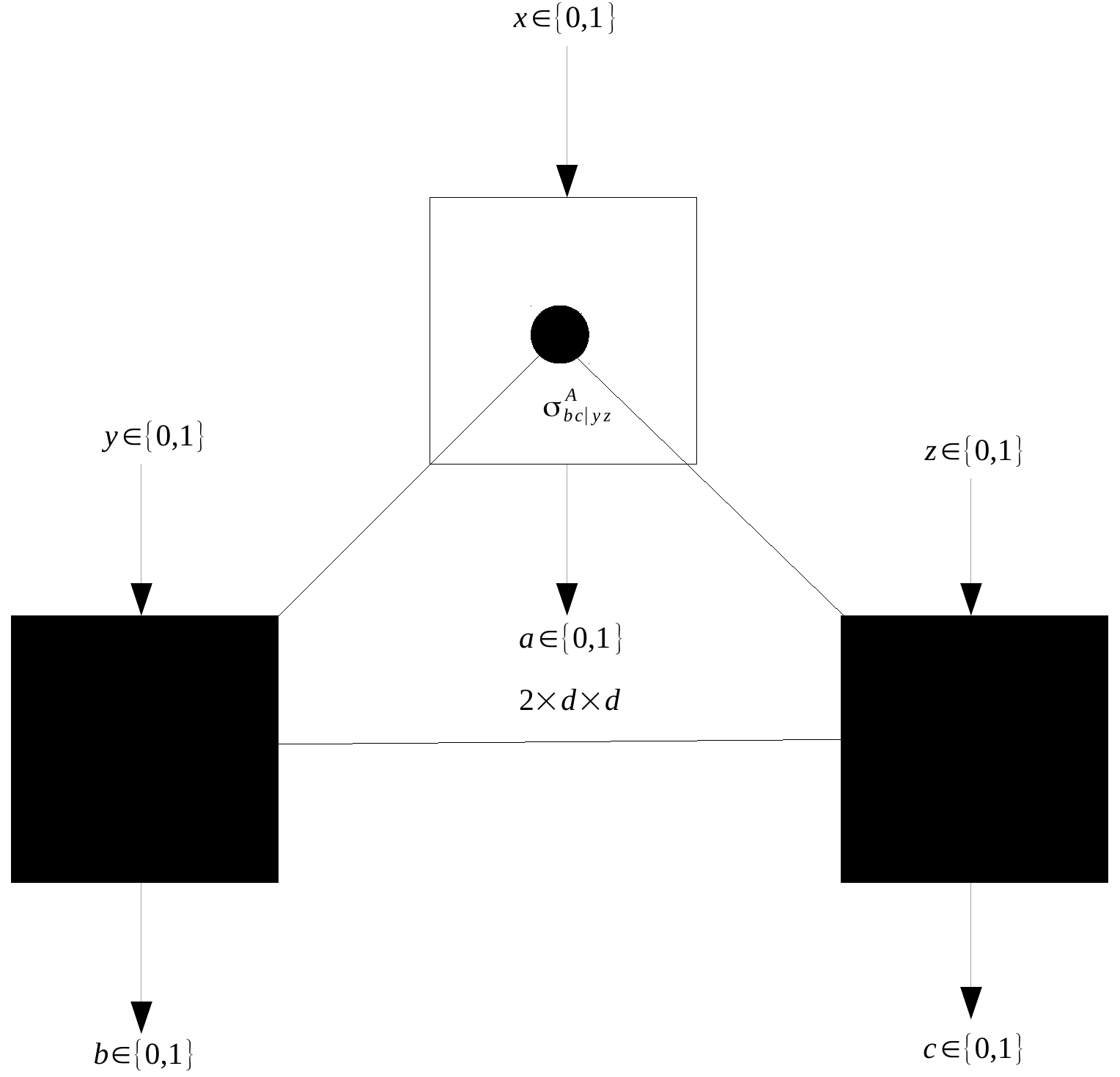}
\caption{Schematic diagram of our $2$SDI tripartite steering scenario: Alice, Bob and Charlie share a $2 \times d \times d$ quantum state. 
Bob and Charlie  perform two dichotomic black-box measurements  to produce assemblages $\sigma^{A}_{bc|yz}$ (\ref{ass1}) on Alice's side. 
On this assemblage, Alice  performs two dichotomic measurements producing the joint probability distributions $P(abc|A_xB_yC_z)$ (here
$a,b,c$ denotes the outcomes and $x,y,z$ denotes the measurement choices)
to check whether the assemblages $\sigma^{A}_{bc|yz}$ prepared by Bob and Charlie demonstrate steerability through the violation of a steering inequality by $P(abc|A_xB_yC_z)$.}\label{ss1}
\end{figure}

We will consider the following 
$2$-sided device-independent ($2$SDI) scenario  (depicted in FIG. \ref{ss1}): 
Three spatially separated parties share an unknown tripartite quantum state $\rho^{ABC}$ in 
$\mathbb{C}^2 \otimes \mathbb{C}^{d} \otimes \mathbb{C}^{d}$  on which
Bob and Charlie performs local black-box measurements  (POVMs).
Suppose $\{M_{b|y}\}_{b,y}$ and $\{M_{c|z}\}_{c,z}$ denote the unknown measurement operators of Bob and Charlie, respectively. Here $M_{b|y} \geq 0$ $\forall b, y$; $\sum_{b} M_{b|y} = \mathbb{I}$ $\forall y$ and $M_{c|z} \geq 0$ $\forall c, z$; $\sum_{c} M_{c|z} = \mathbb{I}$ $\forall z$. 
Then, the scenario is characterized by
 the set of (unnormalized) conditional qubit states on Alice's side $\{\sigma^{A}_{bc|yz}\}_{b,c,y,z}$. 
 The each element in this assemblage is given  as follows:
\be
\sigma^{A}_{bc|yz}=\tr_{BC} (\openone \otimes M_{b|y} \otimes M_{c|z} \rho^{ABC}).
\label{ass1}
\ee
Alice can do local state tomography to determine the above assemblage prepared by Charlie.

 We will now provide the operational 
 definition of tripartite steering in the above $2$SDI scenario. 
 The assemblage $\{\sigma^{A}_{bc|yz}\}_{b,c,yz}$ is called steerable  if it cannot be reproduced
 by a fully separable state   in 
$\mathbb{C}^2 \otimes \mathbb{C}^{d} \otimes \mathbb{C}^{d}$  of the form,
 \be
 \rho^{ABC}=\sum_\lambda p_\lambda \rho_A^\lambda \otimes \rho_B^\lambda \otimes \rho_C^\lambda,
 \ee
 with $\sum_\lambda p_\lambda = 1$ in the given steering scenario.
In our $2$SDI scenario, even if entanglement is not certified between Alice and Bob-Charlie, tripartite steering can still
 occur by the presence of Bell nonlocality between Charlie and Bob \cite{UFNL}. 
 When entanglement between Alice and Bob-Charlie is detected, our $2$SDI scenario demonstrates genuine tripartite steering
 if the assemblage $\sigma^{A}_{bc|yz}$  cannot be reproduced
 by a biseparable state as given by Eq. (\ref{bisep}) in $\mathbb{C}^2 \otimes \mathbb{C}^{d} \otimes \mathbb{C}^{d}$.

 Suppose in our tripartite $2$SDI scenario, the trusted party Alice performs  POVMs having elements $\{ M_{a|x} \}_{a,x}$ for detecting tripartite steering. Here $M_{a|x} \geq 0$ $\forall a, x$; $\sum_{a} M_{a|x} = \mathbb{I}$ $\forall x$. Then the scenario is characterized by the 
 set of conditional probability distributions,
 \be
 P(abc|A_xB_yC_z)=\tr\left( M_{a|x}    \sigma^{A}_{bc|yz} \right),
 \ee
 where $M_{a|x}$  are the measurement operators of Alice.
  Suppose the above quantum correlation  $P(abc|A_xB_yC_z)$ cannot
 be explained by a fully LHS-LHV  model of the form, 
 \be
 P(abc|A_xB_yC_z)=\sum_\lambda q_\lambda P(a|A_x,\rho_A^\lambda)P_\lambda(b|B_y)P_\lambda(c|C_z), \label{ALHS1}
\ee
with $\sum_\lambda q_\lambda = 1$ (Here, $P(a|A_x,\rho_A^\lambda)$ 
are the distributions arising from the local hidden states 
$\rho_A^\lambda$  which are in  $\mathbb{C}^{2}$). Then, it detects tripartite steerability. 

The quantum correlation $P(abc|A_xB_yC_z)$ that detects tripartite steering in our $2$SDI scenario also
detects genuine tripartite steering if it cannot be explained by the following steering LHS-LHV (StLHS) model:
\begin{align}
P(abc|A_xB_yC_z)&=\!\sum_\lambda r_\lambda P^Q_\lambda(ab|A_xB_y)P_\lambda(c|C_z) \nonumber \\
\!&+\!\sum_\lambda p_\lambda P(a|A_x,\rho^\lambda_A)P_\lambda(bc|B_yC_z)\nonumber \\
&+\!\sum_\lambda q_\lambda P_\lambda(b|B_y) P^Q_\lambda(ac|A_xC_z), \label{NLHS1}
\end{align}
with $\sum_\lambda p_\lambda+\sum_\lambda q_\lambda+\sum_\lambda r_\lambda=1$.  Here,
$P(a|A_x,\rho^\lambda_A)$ are the distributions arising from
the qubit states $\rho^\lambda_A$ and, $P_\lambda(b|B_y)$ and $P_\lambda(c|C_z)$ are the distribution on Bob's and Charlie's sides, respectively, arising from black-box measurements performed on a $d$ dimensional quantum state and $P^Q_\lambda(ab|A_xB_y)$ and $P^Q_\lambda(ac|A_xC_z)$ are the 
distribution that can be produced from a $2 \times d$ quantum state; and $P_\lambda(bc|B_y C_z)$ can be reproduced by a $d \times d$ quantum state. Note that in the model given in Eq. (\ref{NLHS1}), the bipartite distributions  at each $\lambda$ level may have Bell nonlocality or steering without Bell nonlocality \cite{UFNL,jeba}. Equivalently, the quantum correlation that detects genuine  tripartite steering in our $2$SDI cannot be reproduced by a biseparable state in $\mathbb{C}^{2} \otimes \mathbb{C}^{d}\otimes \mathbb{C}^{d}$.

 In the next Section we study in which range two one-parameter families of quantum correlations obtained from local dichotomic measurements on tripartite quantum states  detect tripartite quantum steering in our $1$SDI and $2$SDI scenarios. 


\section{Detection of tripartite steering with Svetlichny family} \label{sts}
The Svetlichny family of tripartite correlations is defined as:
\be
P_{SvF}^{V} (abc|A_x B_ y C_z) = \frac{2 + (-1)^{a \oplus b \oplus c \oplus  xy \oplus yz \oplus xz} \sqrt{2} V}{16},
\label{SvF}
\ee
where $0 \le V\le 1$, which can be obtained from the noisy three-qubit GHZ state,
$\rho=V\ketbra{\Phi_{GHZ}}{\Phi_{GHZ}}+(1-V)\openone/8$, where $\ket{\Phi_{GHZ}}=\frac{1}{\sqrt{2}}(\ket{000}+\ket{111})$, for the measurements that 
give rise to the maximal violation of the SI; for instance,
$A_0=\sigma_x$, $A_1=\sigma_y$, $B_0=\frac{\sigma_x-\sigma_y}{\sqrt{2}}$, $B_1=\frac{\sigma_x + \sigma_y}{\sqrt{2}}$, 
$C_0=\sigma_x$ and $C_1=\sigma_y$.  The noisy three-qubit GHZ state is genuinely entangled iff $V>0.429$ \cite{Guhne}.
The Svetlichny family certifies genuine entanglement in a fully device independent way for $V>\frac{1}{\sqrt{2}}$,
as it violates the SI in this range. 
The Svetlichny family  has a fully local hidden variable (LHV) model
when $V\le \frac{1}{\sqrt{2}}$ \cite{B}. This implies that
in this range, it can also arise from a separable state in the higher dimensional space \cite{DQKD}.

\subsection{$1$SDI scenario}
We consider a tripartite  $1$SDI  steering scenario where Charlie performs two dichotomic black-box measurements to prepare conditional two-qubit states on  Alice and Bob's side on which 
Alice and Bob perform pair of incompatible qubit  measurements that demonstrate Bell nonlocality of certain two-qubit states; for instance, the singlet state. Now we are going to present a Lemma which is useful to find out in which ranges the Svetlichny family detects genuine tripartite steering and tripartite steering in the context of the above $1$SDI scenario.
\begin{Lemma}\label{gscssf}
In our $1$SDI scenario mentioned above the Svetlichny family has a steering LHS-LHV model as in Eq. (\ref{NLHS})  in the range $0 < V \le\frac{1}{\sqrt{2}}$
 and has a fully LHS-LHV model as in Eq. (\ref{ALHS}) 
 iff $0 < V\le\frac{1}{2\sqrt{2}}$.
\end{Lemma}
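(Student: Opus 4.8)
The plan is to reduce the whole statement to the single three-body correlator, since every one- and two-party marginal of the Svetlichny family is uniform. A short computation gives $\langle A_xB_yC_z\rangle=(-1)^{xy\oplus yz\oplus xz}V/\sqrt2$, whence the Svetlichny-type functional $S:=\sum_{xyz}(-1)^{xy\oplus yz\oplus xz}\langle A_xB_yC_z\rangle$ takes the value $S(P^V_{SvF})=4\sqrt2\,V$. The lemma then splits into two explicit model constructions (the ``if'' parts) and one tight linear bound on $S$ (the ``only if'' part for the fully LHS--LHV case), each sensitive only to the class of response functions the trusted qubits are allowed.

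For the fully LHS--LHV model of Eq.~(\ref{ALHS}) I would settle the impossibility direction first. There $\langle A_xB_yC_z\rangle=\sum_\lambda q_\lambda a^\lambda_x b^\lambda_y c^\lambda_z$ with the single-qubit constraints $a_0^2+a_1^2\le1$, $b_0^2+b_1^2\le1$ and $c_z=\pm1$. Inserting this into $S$ and maximising over $c_0,c_1$ gives $S\le|a_0u+a_1v|+|a_0v-a_1u|$ with $u=b_0+b_1$, $v=b_0-b_1$; since $(a_0u+a_1v)^2+(a_0v-a_1u)^2=(a_0^2+a_1^2)(u^2+v^2)$ and $u^2+v^2=2(b_0^2+b_1^2)$, a Cauchy--Schwarz step yields the tight bound $S\le2\sqrt{(a_0^2+a_1^2)(b_0^2+b_1^2)}\le2$. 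As $S(P^V_{SvF})=4\sqrt2\,V$, no such model survives $V>1/(2\sqrt2)$. For the converse I would ``qubit-ify'' the fully local model of the family at $V_0=1/\sqrt2$ (Ref.~\cite{B}): replace each deterministic pair $(\varepsilon^A_0,\varepsilon^A_1)\in\{\pm1\}^2$ by the in-plane unit Bloch vector $\tfrac1{\sqrt2}(\varepsilon^A_0,\varepsilon^A_1,0)$, and likewise for Bob along his two orthogonal in-plane directions. Each trusted response is rescaled by $1/\sqrt2$, hence the three-body correlator by $1/2$, converting the $V_0=1/\sqrt2$ model into an exact model at $V=1/(2\sqrt2)$; mixing with the maximally mixed assemblage then covers all $0<V\le1/(2\sqrt2)$.

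For the steering LHS--LHV (biseparable) model of Eq.~(\ref{NLHS}) I expect the single $AB|C$ term to suffice, the new ingredient being that Alice and Bob may now hold an \emph{entangled} two-qubit state. Writing the two Charlie-conditioned correlator tables as $V\cos(\alpha_x+\beta_y)$ (for $z=0$) and $-V\sin(\alpha_x+\beta_y)$ (for $z=1$), the efficient choice is the pair of maximally entangled two-qubit correlators $g_{\mathrm{diag}}=\left(\begin{smallmatrix}1&0\\0&-1\end{smallmatrix}\right)$ and $g_\chi=\left(\begin{smallmatrix}0&1\\1&0\end{smallmatrix}\right)$, for which $\tfrac1{\sqrt2}(g_{\mathrm{diag}}+g_\chi)$ reproduces the cosine table and $\tfrac1{\sqrt2}(g_{\mathrm{diag}}-g_\chi)$ the sine table. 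Using the sign-balanced Charlie pairs $\{(g,(+,+)),(-g,(-,-))\}$ for $g=g_{\mathrm{diag}}$ and $\{(g,(+,-)),(-g,(-,+))\}$ for $g=g_\chi$ (the $-g$ copies obtained by a local $\sigma_z$), all one- and two-body correlators and both Charlie marginals cancel automatically, and $P^V_{SvF}$ is reproduced exactly with total weight $\sqrt2\,V$. This weight is $\le1$ precisely for $V\le1/\sqrt2$, and any slack $1-\sqrt2 V$ is absorbed by a maximally mixed component.

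The hard part is the existence direction for the biseparable model at the endpoint $V=1/\sqrt2$. Every naive attempt stalls well below it: separate qubits for $A$ and $B$ give at most $1/(2\sqrt2)$, a continuous hidden angle shared between a trusted qubit and a quantum pair caps at $1/2$, and feeding a classical Charlie a continuous angle through a square wave caps at $2/\pi$. Reaching $1/\sqrt2$ forces two choices simultaneously: entangled-pair correlators whose entries have modulus one (so the CHSH/Tsirelson strength of the trusted pair is fully exploited, which is why $g_{\mathrm{diag}},g_\chi$ rather than the ``cosine/sine'' correlators appear), and a pairing of Charlie's signs that annihilates the spurious lower-order correlators at no extra weight. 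Checking that $g_{\mathrm{diag}},g_\chi$ are genuine two-qubit correlators (the $xy$-block of the correlation matrix has both singular values equal to one) and that the weights close at exactly $\sqrt2 V$ is the crux; the companion bound $S\le4$ for this biseparable class, again saturated by an entangled two-qubit state, confirms that $1/\sqrt2$ is the true endpoint, mirroring how $S\le2$ pins down $1/(2\sqrt2)$.
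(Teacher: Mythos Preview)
Your argument is correct and complete, but it follows a genuinely different route from the paper's proof. The paper builds the steering LHS--LHV model with four \emph{equally weighted} components ($r_\lambda=\tfrac14$), one for each deterministic Charlie strategy, and lets the $V$--dependence sit in the Alice--Bob two-qubit state: explicit pure states $|\psi_\lambda\rangle=\cos\theta|00\rangle\pm\frac{(1\mp i)}{\sqrt2}\sin\theta|11\rangle$ with $\sin 2\theta=\sqrt2 V$, measured with the fixed observables $A_0=\sigma_x$, $A_1=\sigma_y$, $B_0=(\sigma_x-\sigma_y)/\sqrt2$, $B_1=(\sigma_x+\sigma_y)/\sqrt2$. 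Your construction instead keeps the Alice--Bob states \emph{fixed} (maximally entangled, realizing the extremal correlator tables $g_{\mathrm{diag}},g_\chi$) and pushes the $V$--dependence into the total weight $\sqrt2\,V$, with the slack absorbed by white noise. Amusingly, at the endpoint $V=1/\sqrt2$ the two constructions coincide: your $g_{\mathrm{diag}}$ state is exactly the paper's $|\psi_0\rangle$ at $\theta=\pi/4$.

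For the fully LHS--LHV part the contrast is sharper. The paper does not build a fresh model; it observes that the bipartite distributions $P(ab|A_xB_y,\rho^\lambda_{AB})$ appearing in its steering decomposition are (up to local relabelings) the BB84 family with parameter $W=\sqrt2 V$, and then invokes the known result that this family is two-qubit separable iff $W\le\tfrac12$, hence $V\le\tfrac{1}{2\sqrt2}$. Your ``qubit-ify'' trick---damping each trusted party's deterministic response by $1/\sqrt2$ via an in-plane Bloch vector---gives an independent construction that needs only the existence of the fully LHV model at $V=1/\sqrt2$ (which the paper already cites). For the impossibility direction, the paper simply quotes the inequality $\braket{S}_{2\times 2\times ?}\le 2$ from Ref.~\cite{UFNL}, whereas you rederive it from scratch via Cauchy--Schwarz on the qubit constraints $a_0^2+a_1^2\le1$, $b_0^2+b_1^2\le1$. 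Your route is more self-contained and makes the role of the Tsirelson-type qubit bound explicit; the paper's route is shorter and links the problem to the known BB84 separability threshold. Note, finally, that the lemma as stated only asserts \emph{existence} of the steering LHS--LHV model for $V\le1/\sqrt2$, so your closing remarks on the bound $S\le 4$ for that class are extra (they anticipate Proposition~\ref{ppnp1} rather than being required here).
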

\begin{proof}
See Appendix \ref{PSLHVSF}.
\end{proof}

The above Lemma implies the following two propositions.
\begin{proposition}
The Svetlichny family   detects genuine tripartite steering 
iff $V>\frac{1}{\sqrt{2}}$ in the context of our $1$SDI scenario.
\label{ppnp1}
\end{proposition}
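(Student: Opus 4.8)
The plan is to reduce Proposition \ref{ppnp1} to the characterization of genuine tripartite steering set up above: in the $1$SDI scenario the correlation $P(abc|A_xB_yC_z)$ detects genuine tripartite steering if and only if it admits no steering LHS-LHV decomposition of the form (\ref{NLHS}), equivalently, if and only if it cannot be reproduced by any biseparable state (\ref{bisep}). Since Lemma \ref{gscssf} already controls the existence of a model (\ref{NLHS}) for the Svetlichny family, the argument splits along the threshold $V=\frac{1}{\sqrt 2}$.

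For $0<V\le\frac{1}{\sqrt 2}$ I would invoke Lemma \ref{gscssf} directly: it supplies a steering LHS-LHV model of the form (\ref{NLHS}) for the Svetlichny family throughout this range. By the definition of genuine tripartite steering, the mere existence of such a model certifies that the family does not detect genuine tripartite steering here, which establishes the necessity of $V>\frac{1}{\sqrt 2}$.

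For $V>\frac{1}{\sqrt 2}$ I would argue through the Svetlichny inequality. As recorded in the discussion of the Svetlichny family, the correlation violates the Svetlichny inequality (\ref{SI1}) exactly in this range, the quantum value rising to $4\sqrt 2$ at $V=1$. The key observation is that measuring any biseparable state (\ref{bisep}) with local operators produces a correlation of the hybrid NLHV form (\ref{HNLHV}): each bipartite block $\rho^\lambda_{BC}$, $\rho^\lambda_{AC}$, $\rho^\lambda_{AB}$ yields a genuinely quantum bipartite distribution, which is in particular admissible as one of the unrestricted bipartite terms allowed in (\ref{HNLHV}), while the remaining single party contributes a one-party distribution. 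Since every correlation of the form (\ref{HNLHV}) obeys $\braket{S}\le 4$, a violation of the Svetlichny inequality rules out every biseparable decomposition; by the equivalence quoted above this is precisely the statement that the family detects genuine tripartite steering. Combining the two ranges yields the claimed equivalence.

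I expect the main obstacle to be the logical bookkeeping rather than any computation. One must check that a model of the form (\ref{NLHS}) is itself a special case of (\ref{HNLHV}), so that its absence is genuinely forced by an SI violation, and that the absence of (\ref{NLHS}) simultaneously discharges both clauses in the definition of tripartite steering, namely the absence of a fully LHS-LHV model (\ref{ALHS}) and the detection of entanglement across the cut separating Charlie from Alice and Bob. The endpoint $V=\frac{1}{\sqrt 2}$ also deserves an explicit remark: there the Svetlichny functional saturates at $4$ and Lemma \ref{gscssf} still provides a model (\ref{NLHS}), so no genuine steering occurs, consistent with the strict inequality in the statement.
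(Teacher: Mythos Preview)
Your proposal is correct and follows essentially the same line as the paper's proof: Lemma \ref{gscssf} furnishes the steering LHS-LHV model (\ref{NLHS}) for $0<V\le\frac{1}{\sqrt 2}$, while for $V>\frac{1}{\sqrt 2}$ the violation of the Svetlichny inequality rules out any biseparable realization, hence any model (\ref{NLHS}). The only stylistic difference is that the paper phrases the second step as ``SI violation certifies genuine tripartite entanglement device-independently, hence in the $1$SDI scenario,'' whereas you unpack this by noting that biseparable states yield NLHV correlations (\ref{HNLHV}) bounded by $4$; the content is the same.
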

\begin{proof}
Since the Svetlichny family violates the Svetlichny inequality for $V>\frac{1}{\sqrt{2}}$, it certifies genuine tripartite entanglement in a fully device independent way in that range. Hence, it is followed that the Svetlichny family certifies genuine tripartite entanglement in our $1$SDI scenario as well for $V>\frac{1}{\sqrt{2}}$. The Svetlichny family, therefore, does not have a steering  LHS-LHV model as in Eq.(\ref{NLHS}) in our $1$SDI scenario for $V>\frac{1}{\sqrt{2}}$. On the other hand, following Lemma \ref{gscssf} we can state that the Svetlichny family has a steering LHS-LHV model as in Eq. (\ref{NLHS}) in our $1$SDI scenario in the range $0 < V \le\frac{1}{\sqrt{2}}$. Hence, the Svetlichny family   detects genuine tripartite steering 
iff $V>\frac{1}{\sqrt{2}}$ in the context of our $1$SDI scenario.
\end{proof}

\begin{proposition}
The Svetlichny family detects  tripartite steering 
iff $V>\frac{1}{\sqrt{2}}$ in the context of our $1$SDI scenario. 
\label{p2}
\end{proposition}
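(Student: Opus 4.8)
The plan is to prove both implications by reducing the detection of tripartite steering to a single statement about one bipartition. By the operational definition, the correlation detects tripartite steering in the $1$SDI scenario iff (i) it has no fully LHS-LHV model [Eq.~(\ref{ALHS})] \emph{and} (ii) entanglement across the Charlie$|$Alice-Bob cut is certified, i.e.\ the assemblage $\{\sigma^{AB}_{c|z}\}$ admits no decomposition $\sigma^{AB}_{c|z}=\sum_\lambda p_\lambda P_\lambda(c|C_z)\,\rho^\lambda_{AB}$ with two-qubit states $\rho^\lambda_{AB}$. I would first note that (ii) implies (i): the model in Eq.~(\ref{ALHS}) is precisely the special case $\rho^\lambda_{AB}=\rho^\lambda_A\otimes\rho^\lambda_B$ of such a single-cut decomposition. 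Hence detection of tripartite steering is equivalent to the single statement that the assemblage is steerable across the Charlie$|$Alice-Bob cut, and it suffices to show this holds iff $V>\tfrac{1}{\sqrt2}$.

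For the sufficiency direction ($V>\tfrac{1}{\sqrt2}$) I would simply invoke Proposition~\ref{ppnp1}. There the family is shown to detect \emph{genuine} tripartite steering, i.e.\ to have no biseparable StLHS model [Eq.~(\ref{NLHS})]. Since a state separable across the Charlie$|$Alice-Bob cut is a particular biseparable state [the $r_\lambda$ term of Eq.~(\ref{bisep})], the absence of a biseparable model immediately rules out a single-cut model, so conditions (i) and (ii) both hold and tripartite steering is detected.

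The substantive direction is necessity: for $0<V\le\tfrac{1}{\sqrt2}$ I must exhibit an explicit single-cut model. I would begin by recording that, for the family~(\ref{SvF}), all one- and two-body correlators vanish and only $\langle A_xB_yC_z\rangle=\tfrac{V}{\sqrt2}(-1)^{x y\oplus y z\oplus x z}$ survives, so the family saturates the Svetlichny-type functional
\be
\mathcal S:=\sum_{xyz}(-1)^{x y\oplus y z\oplus x z}\langle A_xB_yC_z\rangle=4\sqrt2\,V .
\ee
Next I would compute the single-cut bound on $\mathcal S$: writing $\langle A_xB_yC_z\rangle=\sum_\lambda p_\lambda\,\tau^\lambda_{xy}c^\lambda_z$ with $\tau^\lambda_{xy}$ the trusted two-qubit correlators and optimising Charlie's untrusted responses $c^\lambda_z\in[-1,1]$ reduces $\mathcal S$ to $\sum_\lambda p_\lambda(|T_0^\lambda|+|T_1^\lambda|)$, where $T_0,T_1$ are the two CHSH-type combinations of $\{\tau_{xy}\}$. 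For the measurements of Eq.~(\ref{SvF}) one finds $|T_0|+|T_1|=\sqrt2\,(|T_{xx}-T_{yy}|+|T_{xy}+T_{yx}|)\le 4$ over all two-qubit states, so $\mathcal S\le4$ on every single-cut model and the boundary sits exactly at $V=\tfrac1{\sqrt2}$. Because the $V=0$ correlation is a product and the single-cut set is convex, it then suffices to build a model at the boundary $V=\tfrac1{\sqrt2}$ and mix it with the $V=0$ point to cover the whole range. I would construct the boundary model from maximally entangled two-qubit states $\rho^\lambda_{AB}$ --- whose maximally mixed marginals make all one-body and all Alice-Charlie, Bob-Charlie two-body correlators vanish automatically --- distributed according to Charlie's deterministic outputs, choosing the Bloch-plane orientations and output signs so as to saturate $|T_0|+|T_1|=4$, reproduce the target three-body correlators, and cancel the residual Alice-Bob two-body terms.

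The hard part is this last construction. Computing the single-cut bound is only a \emph{necessary} test (it rules out steering detection by $\mathcal S$ alone), whereas establishing that no single-cut model is excluded requires an actual ensemble; the bookkeeping of matching all eight three-body correlators while simultaneously forcing every lower-order correlator to vanish with a \emph{bona fide} two-qubit ensemble and valid Charlie responses is where the effort lies. I expect the permutation symmetry of the family and the fact that the extremal states saturating $|T_0|+|T_1|=4$ are rotated maximally entangled states to make a small finite ensemble (four orientations with balanced weights) sufficient.
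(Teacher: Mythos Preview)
Your argument is correct, but you are re-deriving from scratch the single-cut model that the paper has already built in Lemma~\ref{gscssf}. The paper's proof is a two-line citation: Lemma~\ref{gscssf} (proved in Appendix~\ref{PSLHVSF}) exhibits, for every $0<V\le\frac{1}{\sqrt2}$, an explicit decomposition $P_{SvF}^{V}=\sum_{\lambda=0}^{3}\frac14\,P(ab|A_xB_y,\rho^\lambda_{AB})P_\lambda(c|C_z)$ with $\rho^\lambda_{AB}=\ketbra{\psi_\lambda}{\psi_\lambda}$ two-qubit pure states of Schmidt angle $\sin2\theta=\sqrt2\,V$ and Charlie's four deterministic strategies. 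That construction \emph{is} the biseparable $2\times2\times d$ state of Eq.~(\ref{bisep1}), so condition~(ii) fails for $V\le\frac{1}{\sqrt2}$, and the Svetlichny-inequality violation for $V>\frac{1}{\sqrt2}$ handles the converse.

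Your route differs in two ways. First, your logical reduction ``(ii)$\Rightarrow$(i)'' is a clean observation the paper does not make explicit; it lets you bypass the separate discussion of the fully LHS-LHV threshold at $V=\frac{1}{2\sqrt2}$. Second, instead of the $V$-parametrised family of Appendix~\ref{PSLHVSF}, you propose a boundary-plus-convexity construction: build the model only at $V=\frac{1}{\sqrt2}$ with maximally entangled $\rho^\lambda_{AB}$ and then mix with the uniform $V=0$ point. This works (and in fact at $V=\frac{1}{\sqrt2}$ your maximally entangled states are exactly the $\theta=\pi/4$ limit of the paper's $\ket{\psi_\lambda}$), and it is arguably more conceptual; the paper's parametrised family has the compensating advantage of showing directly that the bipartite blocks lie in the BB84 family, which is what yields the sharp fully-LHS-LHV threshold $V=\frac{1}{2\sqrt2}$ also asserted in Lemma~\ref{gscssf}. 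Either way, the ``hard part'' you flag is already done for you in Appendix~\ref{PSLHVSF}, so the shortest correct proof here is simply to invoke Lemma~\ref{gscssf}.
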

\begin{proof}
Svetlichny family detects entanglement between Charlie and Alice-Bob
for $V > \frac{1}{\sqrt{2}}$ as it violates the Svetlichny inequality in this range. Moreover, the steering LHS-LHV model given in the proof of Lemma \ref{gscssf} for the Svetlichny family implies that for $V\le \frac{1}{\sqrt{2}}$, it
can be reproduced by a $2 \times 2 \times d$ biseparable state of the 
form,
 \be \label{bisep1}
 \rho^{ABC}=\sum^3_{\lambda=0} r_\lambda \rho_{AB}^\lambda \otimes \ketbra{\lambda}{\lambda},
 \ee
 with $\sum_\lambda r_\lambda = 1$. Therefore, the Svetlichny family detects entanglement between Charlie and Alice-Bob
 iff $V > \frac{1}{\sqrt{2}}$.
On the other hand, in the context of our $1$SDI scenario, the Svetlichny family does not have a fully LHS-LHV model as in Eq. (\ref{ALHS}) following Lemma \ref{gscssf} 
 for $V > \frac{1}{2\sqrt{2}}$. Combining these two facts we can state that the Svetlichny family detects entanglement between Charlie and Alice-Bob and does not have a fully LHS-LHV model as in Eq. (\ref{ALHS}) in the range $V>\frac{1}{\sqrt{2}}$ following Lemma \ref{gscssf}. Hence, in the context of our $1$SDI scenario,
the Svetlichny family   detects  tripartite steering 
iff $V>\frac{1}{\sqrt{2}}$. 
\end{proof}

From the Propositions \ref{ppnp1} and \ref{p2} we observe the following two salient features: 1) in our 1SDI scenario, the Svetlichny family does not detect tripartite steering in the range $\frac{1}{2 \sqrt{2}}  < V \leq \frac{1}{\sqrt{2}}$ despite it does not have a fully LHS-LHV model in this range and 2) 
the ranges in which the Svetlichny family detects tripartite steering and genuine tripartite steering in our 1SDI scenario are the same.

\subsection{$2$SDI scenario}
We now consider a tripartite $2$SDI steering scenario where Bob and Charlie perform two dichotomic black-box measurements to prepare conditional single qubit states on 
Alice's side  on which Alice performs two mutually unbiased  qubit  measurements.
We are now interested in which ranges the Svetlichny family detects genuine tripartite steering and tripartite steering in the context of this $2$SDI scenario.

\begin{proposition}
 The Svetlichny family detects genuine tripartite steering in our $2$SDI scenario iff $V>\frac{1}{\sqrt{2}}$.
\end{proposition}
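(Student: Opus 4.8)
The plan is to prove both directions through the equivalent characterization stated at the end of Section \ref{2SDID}: in the $2$SDI scenario the correlation $P_{SvF}^V$ detects genuine tripartite steering if and only if it cannot be reproduced by a biseparable state of the form (\ref{bisep}) in $\mathbb{C}^2\otimes\mathbb{C}^d\otimes\mathbb{C}^d$. Reducing the statement to this single condition lets me treat the two implications symmetrically and reuse results already established for the $1$SDI scenario.

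For the sufficiency direction ($V>1/\sqrt 2 \Rightarrow$ detection), I would invoke the fact recorded below Eq. (\ref{SvF}) that the Svetlichny family violates the Svetlichny inequality (\ref{SI1}) for $V>1/\sqrt 2$ and hence certifies genuine tripartite entanglement in a fully device-independent way. Since full device-independence is the weakest set of assumptions, a correlation that admits no hybrid NLHV decomposition (\ref{HNLHV})—whose bipartite parts may carry arbitrary nonlocality—a fortiori admits no StLHS decomposition (\ref{NLHS1}), whose bipartite terms are merely quantum. Equivalently, the underlying state is genuinely tripartite entangled, so it is entangled across the $A|BC$ cut and cannot be written in the biseparable form (\ref{bisep}); both requirements in the $2$SDI definition of genuine tripartite steering are therefore met for $V>1/\sqrt 2$.

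For the necessity direction ($V\le 1/\sqrt 2 \Rightarrow$ no detection), the strategy is to transport the explicit model of Lemma \ref{gscssf}. By the proof of Proposition \ref{p2}, for $V\le 1/\sqrt 2$ the Svetlichny family is reproduced by the $2\times 2\times d$ biseparable state (\ref{bisep1}), $\rho^{ABC}=\sum_{\lambda=0}^3 r_\lambda\,\rho_{AB}^\lambda\otimes\ketbra{\lambda}{\lambda}$, which is biseparable across the $AB|C$ cut and hence of the form (\ref{bisep}). The key step is to move this state into the $2$SDI Hilbert space: embedding Bob's qubit factor into $\mathbb{C}^d$ leaves the state biseparable in $\mathbb{C}^2\otimes\mathbb{C}^d\otimes\mathbb{C}^d$, while relabelling Bob's formerly trusted qubit measurements as black-box measurements reproduces exactly the same $P_{SvF}^V$. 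Consequently the correlation is reproducible by a biseparable state in $\mathbb{C}^2\otimes\mathbb{C}^d\otimes\mathbb{C}^d$ and, by the characterization of Section \ref{2SDID}, fails to detect genuine tripartite steering.

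I expect the main obstacle to be making this transfer airtight: one must verify that weakening the trust on Bob (trusted qubit $\to$ untrusted black box) together with the enlargement $\mathbb{C}^2\hookrightarrow\mathbb{C}^d$ of Bob's local space can only preserve, never destroy, the biseparable explanation, and that the measurement operators reproducing $P_{SvF}^V$ survive the embedding. The point worth emphasizing is that Alice genuinely remains a qubit throughout—her share sits inside the two-qubit block $\rho_{AB}^\lambda$ rather than being asked to answer two incompatible measurements deterministically on its own—so the qubit constraint on Alice imposed by (\ref{bisep}) is automatically respected. By contrast, the sufficiency direction is comparatively routine, amounting to the observation that device-independent certification of genuine entanglement must continue to hold in the partially trusted $2$SDI scenario.
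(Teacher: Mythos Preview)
Your proposal is correct and follows essentially the same approach as the paper: both directions use, respectively, the device-independent Svetlichny-inequality violation for $V>1/\sqrt{2}$ and the explicit $AB|C$-biseparable state (\ref{bisep1}) furnished by Lemma~\ref{gscssf}/Proposition~\ref{p2} for $V\le 1/\sqrt{2}$. You are somewhat more explicit than the paper about the embedding $\mathbb{C}^2\hookrightarrow\mathbb{C}^d$ on Bob's side and the preservation of Alice's qubit constraint, but this only spells out what the paper leaves implicit.
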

\begin{proof}
 Note that the Svetlichny family can be reproduced by a $2 \times 2 \times d$ dimensional biseparable state of the form given in Eq. (\ref{bisep1}) for $V\le \frac{1}{\sqrt{2}}$.
 This implies that it  does not detect genuine tripartite entanglement in the range $ V\le \frac{1}{\sqrt{2}}$ in our $2$SDI scenario. On the other hand, the Svetlichny family detects genuine tripartite entanglement  for $V>\frac{1}{\sqrt{2}}$ in the fully device independent scenario as it violates the Svetlichny inequality in this range. Hence, the Svetlichny family detects genuine tripartite entanglement  for $V>\frac{1}{\sqrt{2}}$ in our $2$SDI scenario as well. The Svetlichny family, therefore, detects genuine tripartite steering in our $2$SDI scenario iff $V>\frac{1}{\sqrt{2}}$.
  \end{proof}
 
 \begin{proposition}
  The Svetlichny family detects tripartite steering in our $2$SDI scenario for $V> \frac{1}{2}$.
  \label{p4}
 \end{proposition}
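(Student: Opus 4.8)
The plan is to construct a linear steering inequality that every fully LHS--LHV correlation of the form (\ref{ALHS1}) must obey, and to show that the Svetlichny family violates it exactly when $V>\frac{1}{2}$; since in the $2$SDI scenario detection of tripartite steering is precisely the failure of a model (\ref{ALHS1}), this suffices. First I would compute the three-party correlators $\braket{A_xB_yC_z}=\sum_{abc}(-1)^{a\oplus b\oplus c}P^V_{SvF}(abc|A_xB_yC_z)$. The constant part $2/16$ cancels under the parity sum, and the surviving term gives $\braket{A_xB_yC_z}=\frac{V}{\sqrt{2}}(-1)^{xy\oplus yz\oplus xz}$. I would then group these into two sums anchored on Alice's two settings, $S_0=\braket{A_0B_0C_0}+\braket{A_0B_0C_1}+\braket{A_0B_1C_0}-\braket{A_0B_1C_1}$ and $S_1=\braket{A_1B_0C_0}-\braket{A_1B_0C_1}-\braket{A_1B_1C_0}-\braket{A_1B_1C_1}$, each of which evaluates to $2\sqrt{2}\,V$, so that $S_0+S_1=4\sqrt{2}\,V$ for the family.

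Next I would bound $S_0+S_1$ over the model (\ref{ALHS1}). Writing $\braket{A_x}_\lambda=\tr(A_x\rho^\lambda_A)$ and letting $B_y^\lambda,C_z^\lambda\in[-1,1]$ denote Bob's and Charlie's response values, the two sums read $S_0=\sum_\lambda q_\lambda\braket{A_0}_\lambda\beta_0^\lambda$ and $S_1=\sum_\lambda q_\lambda\braket{A_1}_\lambda\beta_1^\lambda$, where with $u=C_0^\lambda+C_1^\lambda$ and $v=C_0^\lambda-C_1^\lambda$ one has $\beta_0^\lambda=B_0^\lambda u+B_1^\lambda v$ and $\beta_1^\lambda=B_0^\lambda v-B_1^\lambda u$. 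The key algebraic identity is $(\beta_0^\lambda)^2+(\beta_1^\lambda)^2=\big((B_0^\lambda)^2+(B_1^\lambda)^2\big)(u^2+v^2)\le 2\cdot 4=8$. Because Alice's two trusted measurements are mutually unbiased qubit observables, every hidden qubit state $\rho_A^\lambda$ satisfies $\braket{A_0}_\lambda^2+\braket{A_1}_\lambda^2\le 1$. A Cauchy--Schwarz step at each $\lambda$, namely $\braket{A_0}_\lambda\beta_0^\lambda+\braket{A_1}_\lambda\beta_1^\lambda\le\sqrt{\braket{A_0}_\lambda^2+\braket{A_1}_\lambda^2}\,\sqrt{(\beta_0^\lambda)^2+(\beta_1^\lambda)^2}\le\sqrt{8}=2\sqrt{2}$, followed by convex averaging over $q_\lambda$, yields the steering inequality $S_0+S_1\le 2\sqrt{2}$.

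Comparing the two, $S_0+S_1=4\sqrt{2}\,V>2\sqrt{2}$ if and only if $V>\frac{1}{2}$, so for every $V>\frac{1}{2}$ the Svetlichny family violates the bound and hence admits no model (\ref{ALHS1}); it therefore detects tripartite steering in the $2$SDI scenario. I expect the delicate step to be establishing the steering bound itself: one must verify the orthogonal-pair identity $(\beta_0^\lambda)^2+(\beta_1^\lambda)^2\le 8$ for arbitrary response values in $[-1,1]$ rather than only deterministic $\pm 1$ outcomes, and check that it is precisely the mutual unbiasedness of Alice's settings that supplies the qubit constraint $\braket{A_0}_\lambda^2+\braket{A_1}_\lambda^2\le 1$, which is what sharpens the naive local threshold of $1/\sqrt{2}$ down to $1/2$.
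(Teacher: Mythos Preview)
Your proposal is correct and follows essentially the same route as the paper: both use a Svetlichny-type linear functional whose LHS (model~(\ref{ALHS1})) bound in the $2$SDI scenario is $2\sqrt{2}$, and observe that the Svetlichny family attains the value $4\sqrt{2}V$, so violation occurs for $V>\tfrac{1}{2}$. The only difference is that the paper simply invokes Ref.~\cite{UFNL} for the steering bound $\braket{S}_{2\times ?\times ?}\le 2\sqrt{2}$, whereas you derive it from first principles via the identity $(\beta_0^\lambda)^2+(\beta_1^\lambda)^2=\big((B_0^\lambda)^2+(B_1^\lambda)^2\big)(u^2+v^2)\le 8$, the qubit MUB constraint $\braket{A_0}_\lambda^2+\braket{A_1}_\lambda^2\le 1$, and Cauchy--Schwarz; your functional $S_0+S_1$ is a relabelling of the Svetlichny operator chosen so that the family (\ref{SvF}) is aligned with it.
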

\begin{proof}
In Ref. \cite{UFNL}, it has been shown that the violation of the following inequality (Eq. (22) in \cite{UFNL} with $N$ (Number of parties) $=3$ and $T$ (Number of trusted parties) $=1$):
\ba \label{SIAEPR}
\braket{S}_{2\times ? \times ?} \overset{\mathrm{LHS}}{\le} 2\sqrt{2},
\ea
detects tripartite steering in our $2$SDI scenario. Here, $S$ is the Svetlichny operator given in the Svetlichny inequality (\ref{SI1}), $2\times ? \times ?$ indicates that Alice performs qubit measurements
while Bob and Charlie perform black-box measurements.
Note that the Svetlichny family  
violates the above steering inequality for $V>\frac{1}{2}$. Thus, the Svetlichny family detects tripartite steering
for $V>\frac{1}{2}$ in the $2$SDI scenario. 
\end{proof}

From the aforementioned Propositions we observe the following two salient features:
1) the ranges in which the Svetlichny family detects tripartite steering and genuine tripartite steering in our $2$SDI 
scenario are different and 2) Svetlichny family detects more tripartite entangled states to be tripartite steerable in the $2$SDI scenario than
in the case of $1$SDI scenario. Now we are going to make the following important observation.

\begin{observation}
Quantum violation of the tripartite steering inequality (\ref{SIAEPR}) by $2 \times 2 \times 2$ systems  certifies genuine entanglement in that $2 \times 2 \times 2$ systems, even if genuine nonlocality or genuine steering is not detected.
\end{observation}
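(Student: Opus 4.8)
The plan is to prove the contrapositive: to show that no biseparable state in $\mathbb{C}^2\otimes\mathbb{C}^2\otimes\mathbb{C}^2$, measured with Alice's trusted mutually unbiased qubit settings, can violate the steering inequality (\ref{SIAEPR}), i.e. that every such state obeys $\langle S\rangle\le 2\sqrt2$. Since the Svetlichny family attains $\langle S\rangle=4\sqrt2\,V$, a violation ($V>\frac{1}{2}$) would then be incompatible with biseparability, forcing the underlying $2\times2\times2$ state to be genuinely entangled; and because the Svetlichny inequality (\ref{SI1}) and genuine steering are detected only for $V>\frac{1}{\sqrt2}$, this certifies genuine entanglement precisely in the gap $\frac{1}{2}<V\le\frac{1}{\sqrt2}$ where neither genuine nonlocality nor genuine steering is seen. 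A quicker but weaker route, adequate for the GHZ realization itself, is to combine Proposition~\ref{p4} with the known threshold $V>0.429$ for genuine entanglement of the noisy GHZ state and the threshold $V>\frac{1}{\sqrt2}$ for the stronger notions.

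Because $\langle S\rangle$ is linear in $\rho^{ABC}$, it suffices to bound it on the extreme points of the biseparable set, that is, on states separable across each of the three bipartitions, the worst case being the largest of the three cut values. The key algebraic step is the factorization $S=A_0\,\beta+A_1\,\beta'$, where $\beta$ and $\beta'$ are the two complementary CHSH operators of Bob and Charlie, with $\beta+\beta'=2(B_0C_1+B_1C_0)$ and $\beta-\beta'=2(B_1C_1-B_0C_0)$. For a state separable across $A|BC$ one gets $\langle S\rangle=\langle A_0\rangle\langle\beta\rangle+\langle A_1\rangle\langle\beta'\rangle$, and I would invoke (i) the CHSH complementarity bound $\langle\beta\rangle^2+\langle\beta'\rangle^2\le 8$, valid for any two-qubit state, together with (ii) the trusted-qubit constraint $\langle A_0\rangle^2+\langle A_1\rangle^2\le 1$ that follows from Alice's mutually unbiased (orthogonal Bloch) settings, so that by Cauchy--Schwarz $\langle S\rangle\le\sqrt{\langle A_0\rangle^2+\langle A_1\rangle^2}\,\sqrt{\langle\beta\rangle^2+\langle\beta'\rangle^2}\le 2\sqrt2$. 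This settles the cut in which the trusted party sits alone.

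For the remaining two cuts I would use the permutation symmetry of $S$ to write $S=B_0\,\gamma+B_1\,\gamma'$ with $\gamma,\gamma'$ the complementary CHSH operators of Alice and Charlie (and symmetrically for $C|AB$). The same complementarity gives $\langle\gamma\rangle^2+\langle\gamma'\rangle^2\le 8$, hence $\langle S\rangle\le 2\sqrt2\,\sqrt{\langle B_0\rangle^2+\langle B_1\rangle^2}$. The trouble is that the separated party is now untrusted, so $\langle B_0\rangle^2+\langle B_1\rangle^2$ is only bounded by $2$ (it equals $2$ when the two settings act identically), and indeed a bare biseparable $2\times2\times2$ state can reach $\langle S\rangle=4$ across such a cut, e.g. $|\Psi^+\rangle_{AC}\langle\Psi^+|\otimes\rho_B$ with Bob's two settings coinciding. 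This is the main obstacle: the cross-cut contribution is not controlled by biseparability alone, and the value $\langle S\rangle$ by itself is therefore insufficient.

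To close this gap the hard part will be to bring in the full correlation structure of the Svetlichny family rather than only the scalar $\langle S\rangle$. Across the $B|AC$ cut every correlator factorizes as $\langle A_xB_yC_z\rangle=\langle B_y\rangle\,\langle A_xC_z\rangle$, so the dependence on $y$ is forced to be rank one; but the Svetlichny correlator $\langle A_xB_yC_z\rangle=\frac{V}{\sqrt2}(-1)^{xy\oplus yz\oplus xz}$ is a genuinely three-way, GHZ-type pattern with vanishing one- and two-party marginals and fixed magnitude $\frac{V}{\sqrt2}$. I expect the decisive lemma to be that matching this pattern with correlators bounded by $\frac{1}{\sqrt2}$ rules out the extremal (effectively identical-setting, polarized-marginal) configurations that saturate the cross-cut bound, thereby pinning each cross-cut contribution back to $2\sqrt2$. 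Converting this ``non-factorizability at high visibility'' statement into a clean inequality, so that all three cuts satisfy $\langle S\rangle\le 2\sqrt2$ and any biseparable realization forces $V\le\frac{1}{2}$, is the step I anticipate to be the most delicate, and it is exactly where the $2\times2\times2$ dimensional restriction (as opposed to the $2\times2\times4$ biseparable model of Eq.~(\ref{bisep1})) does the essential work.
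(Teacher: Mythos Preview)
Your strategy and the paper's diverge at precisely the point you flag as the obstacle. The paper does not attempt to bound $\langle S\rangle$ for biseparable states in the $2$SDI setting with Bob and Charlie uncharacterized. Instead it introduces a separate biseparability inequality, Eq.~(\ref{deeqbi}), derived in Appendix~\ref{bisepapp}, which explicitly assumes that \emph{all three} parties perform incompatible qubit measurements. Under that fully characterized hypothesis the paper uses the same decomposition $S=CHSH\cdot C_1+CHSH'\cdot C_0$ and its permutations, together with the two-qubit complementarity of $CHSH$ and $CHSH'$, to argue that every biseparable three-qubit state obeys $\langle S\rangle\le 2\sqrt2$. The Observation is then a single line: the steering bound (\ref{SIAEPR}) and the biseparability bound (\ref{deeqbi}) coincide numerically, so any $2\times2\times2$ correlation violating the former also violates the latter and must therefore come from a genuinely entangled state.

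The cross-cut difficulty you raise --- a biseparable state across $B|AC$ with $B_0=B_1$ reaching $\langle S\rangle=4$ --- is genuine if Bob's measurements stay uncharacterized, but the paper's biseparability inequality rules out compatible settings on every side by fiat. With that extra assumption Bob's and Charlie's single-party expectations are constrained exactly as Alice's are, and your own $A|BC$ argument (Cauchy--Schwarz with $\langle\beta\rangle^2+\langle\beta'\rangle^2\le 8$) extends symmetrically to the other two cuts. Your proposed detour through the full correlator pattern of the Svetlichny family is therefore unnecessary for the paper's claim, and it would in any case cover only that one family, whereas the Observation is stated for arbitrary $2\times2\times2$ correlations violating (\ref{SIAEPR}). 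In short, you are attacking a strictly harder, genuinely partially-device-independent statement; the paper's Observation is the weaker one that additionally assumes incompatible qubit observables on all three sides, and for that statement the obstruction you identify simply does not arise.
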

\begin{proof}
We consider the following Svetlichny biseparability  inequality: 
\be
\braket{S}_{2\times 2 \times 2} \overset{\mathrm{Bi-sep}}{\le} 2\sqrt{2}, 
\label{deeqbi}
\ee
whose violation detects genuine tripartite entanglement in $2 \times 2 \times 2$ systems (for derivation see the appendix \ref{bisepapp}).
Here, $\braket{S}_{2\times 2 \times 2}$ denotes the Svetlichny operator with the measurement observables on each side being incompatible qubit
measurements.
Note that quantum violation of tripartite steering inequality (\ref{SIAEPR}) by $2 \times 2 \times 2$ systems implies quantum violation of the Svetlichny biseparability  inequality (\ref{deeqbi})
by that $2 \times 2 \times 2$ systems.  Because, for both of these two inequalities the upper bounds are the same. Hence the claim.
\end{proof} 

We have illustrated the above results with the Svetlichny family in Fig. \ref{SvFF}.

\begin{figure}
\centering
\includegraphics[width=0.45\textwidth]{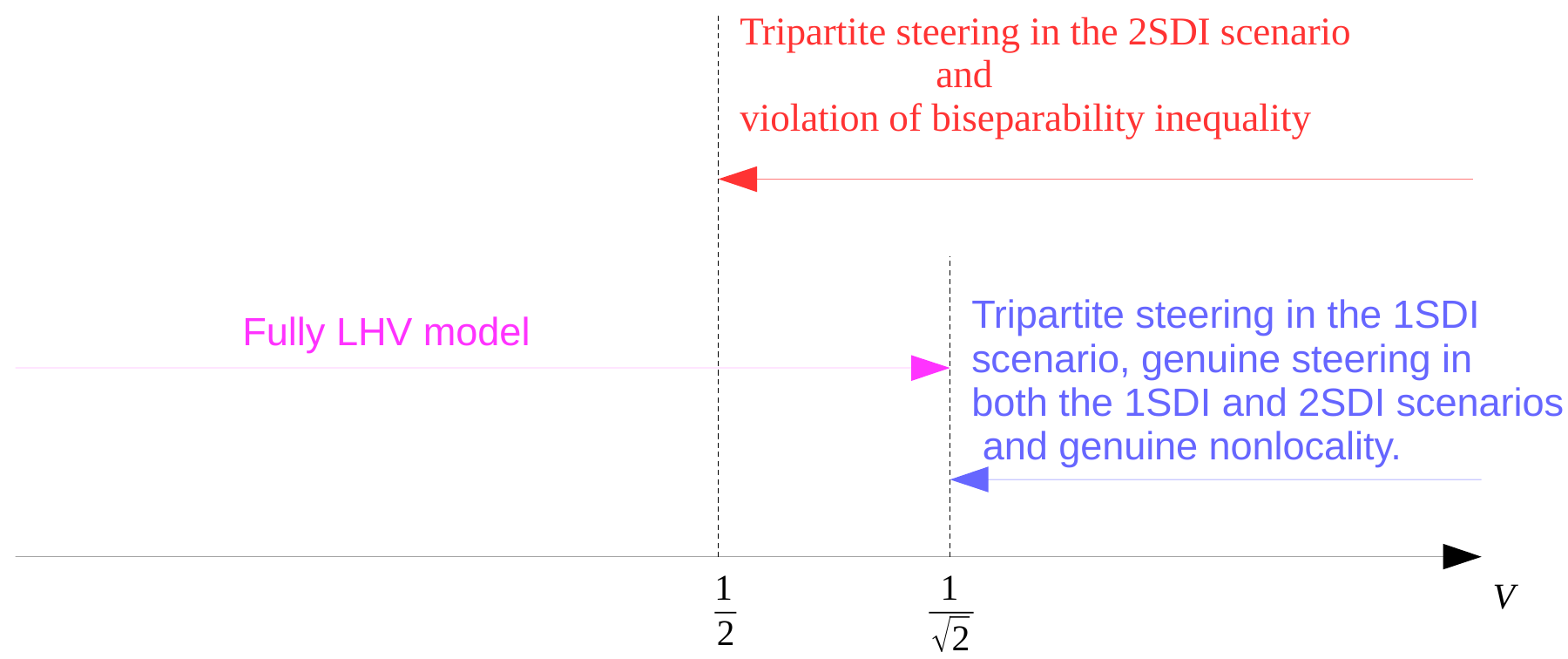}
\caption{Regions of the parameter $V$ in which the Svetlichny family is genuinely nonlocal, detects genuine steering and tripartite
steering, has a fully LHV model and violates biseparability inequality.}\label{SvFF}
\end{figure}

\section{Detection of tripartite steering with Mermin family}\label{mts}
The Mermin family of tripartite correlations is defined  as
\begin{equation}
P_{MF}^{V} (abc|A_x B_ y C_z) = \frac{1 + (-1)^{a \oplus b \oplus c \oplus  xy \oplus yz \oplus xz} \delta_{x \oplus y \oplus 1,z} V}{8},  
\label{mfameq}
\end{equation}
where $0 < V \leq 1$, which can be obtained from the noisy three-qubit GHZ state for the measurements 
that give rise to the GHZ paradox; for instance, $A_0=\sigma_x$, $A_1=\sigma_y$, $B_0=\sigma_x$, $B_1=\sigma_y$,
$C_0=\sigma_x$ and $C_1=-\sigma_y$.
The Mermin family is Bell nonlocal for $V>\frac{1}{2}$ as it violates the MI given in Eq. (\ref{MI0}).
This implies that it certifies tripartite entanglement for $V>\frac{1}{2}$. In that range, 
the Mermin family is not genuinely nonlocal since it has a NLHV model as in Eq. (\ref{HNLHV}) \cite{JGA17}. However, it certifies
genuine tripartite entanglement for $V>\frac{1}{\sqrt{2}}$ in a fully device-independent way since it violates the Mermin inequality more than $2\sqrt{2}$ \cite{SU01}.
We will study tripartite steering of the Mermin family in our $1$SDI and $2$SDI scenarios.

\begin{figure}
\centering
\includegraphics[width=0.45\textwidth]{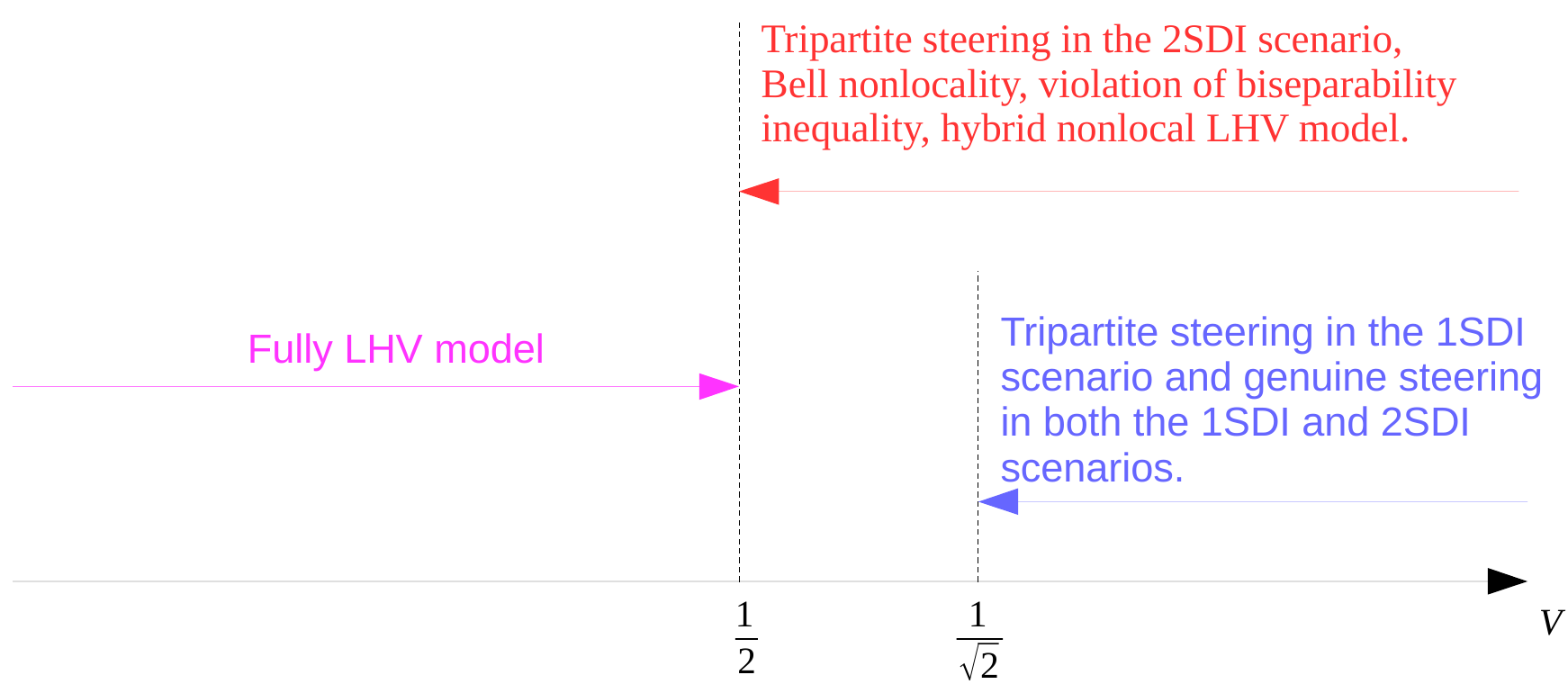}
\caption{Regions of the parameter $V$ in which the Mermin family detects Bell-nonlocality, genuine steering and tripartite steering, has fully LHV model and hybrid nonlocal LHV model, and violates the biseparability inequality.}\label{MeFF}
\end{figure}

\subsection{$1$SDI scenario}
We consider a tripartite  $1$SDI steering scenario where Charlie performs two dichotomic black-box measurements to prepare conditional two-qubit states on  Alice and Bob's side on which 
Alice and Bob perform pair of incompatible qubit  measurements that demonstrate EPR steering without Bell nonlocality.
Now we present a Lemma which is useful to find out in which ranges the Mermin family detects genuine tripartite steering and tripartite steering in the context of the above $1$SDI scenario.
\begin{Lemma}\label{gscsmf}
In our $1$SDI scenario mentioned above the Mermin family has a steering LHS-LHV model as in Eq. (\ref{NLHS}) in the range $0 < V \le\frac{1}{\sqrt{2}}$ and
has a fully LHS-LHV model iff $0 < V\leq\frac{1}{2\sqrt{2}}$.
\end{Lemma}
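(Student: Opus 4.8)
The plan is to follow the template of the proof of Lemma~\ref{gscssf}, exploiting that $P_{MF}^V$ is affine in $V$: since $P_{MF}^V=(1-V)\,P_{MF}^0+V\,P_{MF}^1$ with $P_{MF}^0$ the uniform distribution, it suffices to produce the required models at the two endpoints $V=\frac{1}{\sqrt2}$ and $V=\frac{1}{2\sqrt2}$ and then extend them down to $V=0$ by mixing with the maximally mixed state (keeping the untrusted party's black-box measurement fixed so that the mixture of states produces the mixture of correlations). First I would record that every one- and two-party marginal of the Mermin family is uniform and that its only nonvanishing correlator is $\braket{A_xB_yC_z}=V(-1)^{xy\oplus yz\oplus xz}\delta_{x\oplus y\oplus1,z}$, whence $\braket{M}=4V$.

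For the fully LHS-LHV claim the \emph{only if} direction is immediate. In the model (\ref{ALHS}) one has $\braket{A_xB_yC_z}=\sum_\lambda q_\lambda a^\lambda_x b^\lambda_y c^\lambda_z$ with $a^\lambda_x=\tr(A_x\rho^\lambda_A)$, $b^\lambda_y=\tr(B_y\rho^\lambda_B)$ and $c^\lambda_z\in[-1,1]$, and since Alice's (and Bob's) two measurements are orthogonal qubit observables one gets $(a^\lambda_0)^2+(a^\lambda_1)^2\le1$ and likewise for Bob. Writing the bracket in $\braket{M}$ as $c_1(a_0b_0-a_1b_1)+c_0(a_0b_1+a_1b_0)$ and using $|p|+|q|\le\sqrt2\sqrt{p^2+q^2}$ together with $|(a_0+ia_1)(b_0+ib_1)|\le1$ yields the steering bound $\braket{M}\le\sqrt2$; since $\braket{M}=4V$, a fully LHS-LHV model is excluded precisely for $V>\frac{1}{2\sqrt2}$.

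For the two existence statements I would use the common biseparable template (\ref{bisep1}), $\rho^{ABC}=\sum_\lambda r_\lambda\,\rho^\lambda_{AB}\otimes\ketbra{\lambda}{\lambda}$, in which Charlie's two dichotomic measurements read the two bits of the flag $\lambda$ while Alice and Bob measure $\sigma_x,\sigma_y$. With this template, matching the family reduces to realising, for each flag, an $xy$-plane correlation block with both singular values equal to $\sqrt2\,V$ (for instance $\bigl(\begin{smallmatrix}V&V\\V&-V\end{smallmatrix}\bigr)$) together with vanishing local marginals; I would fix $r_\lambda=\tfrac14$ and choose the two flag functions so that the four blocks assemble into the target correlators and all marginals cancel. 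For the steering LHS-LHV model I may take each $\rho^\lambda_{AB}$ entangled, and such a block is produced by a maximally entangled two-qubit state (which has zero local marginals) as long as $\sqrt2\,V\le1$, i.e.\ $V\le\frac{1}{\sqrt2}$. Being separable across $AB|C$, this biseparable state furnishes a model of the form (\ref{NLHS}), and convexity extends it to $0<V\le\frac{1}{\sqrt2}$.

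For the \emph{if} direction of the fully LHS-LHV claim the same template must be run with \emph{product} hidden states $\rho^\lambda_A\otimes\rho^\lambda_B$, so each flag's block is a mixture of rank-one Bloch outer products; the separability estimate $\|T\|_1=\|\sum_k p_k\vec a^k(\vec b^k)^{\mathsf T}\|_1\le\sum_k p_k|\vec a^k||\vec b^k|\le1$ forces its two singular values to obey $2\sqrt2\,V\le1$, i.e.\ $V\le\frac{1}{2\sqrt2}$, and at the boundary the block is realised by a product mixture aligned with its two singular directions (with sign-flipped copies added to cancel marginals); convexity then covers $0<V\le\frac{1}{2\sqrt2}$. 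The main obstacle I anticipate is precisely these two explicit block realisations: getting $\braket{M}$ right is trivial, but simultaneously reproducing the four vanishing correlators and all vanishing marginals — rather than merely saturating $\braket{M}$ — is what pins the constructions down. The trace-norm gap between the entangled realisation ($\sqrt2\,V\le1$) and the separable realisation ($2\sqrt2\,V\le1$) of the same block is exactly what produces the factor of two between the two thresholds.
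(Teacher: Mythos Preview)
Your proposal is correct and uses the same structural template as the paper: Charlie's four deterministic strategies indexed by $\lambda\in\{0,1,2,3\}$ with weights $r_\lambda=\tfrac14$, combined with four two-qubit blocks on Alice--Bob whose $\sigma_x,\sigma_y$ correlation matrices are CHSH-type with both singular values $\sqrt{2}V$. The execution differs in three places. First, the paper writes out the four $4\times4$ probability tables and explicit pure states $|\psi_\lambda\rangle=\cos\theta\,|00\rangle\pm\tfrac{(1\pm i)}{\sqrt2}\sin\theta\,|11\rangle$ (with $\sin2\theta=\sqrt2 V$) that realise them for every $V\le\tfrac{1}{\sqrt2}$; you instead realise the endpoint with maximally entangled states and invoke affinity in $V$ to mix down, which is equivalent and tidier. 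Second, for the fully LHS-LHV threshold the paper identifies each block as the CHSH family with parameter $W=\sqrt2 V$ and cites Goh--Bancal--Scarani for its two-qubit separability iff $W\le\tfrac12$; your trace-norm bound $\|T_{xy}\|_1\le1$ for separable states recovers the same threshold self-containedly. Third, for the ``only if'' direction the paper quotes the Cavalcanti--He--Reid--Wiseman inequality $\braket{M}_{2\times2\times?}\le\sqrt2$, whereas you re-derive it via $M=C_1(A_0B_0-A_1B_1)+C_0(A_0B_1+A_1B_0)$ and the Bloch constraint $a_0^2+a_1^2\le1$. The one step you rightly flag as needing care --- that the four sign-variants of $\bigl(\begin{smallmatrix}V&V\\V&-V\end{smallmatrix}\bigr)$ paired with the four deterministic flag-readings actually assemble into the Mermin correlator pattern $\braket{A_xB_yC_z}=V(-1)^{xy\oplus yz\oplus xz}\delta_{x\oplus y\oplus1,z}$ with all remaining correlators and marginals vanishing --- is exactly what the paper settles by exhibiting the tables explicitly.
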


\begin{proof}
See Appendix \ref{PSLHVMF} for the proof.
\end{proof}

The above Lemma implies the following two propositions.
\begin{proposition}
The Mermin family   detects genuine tripartite steering 
iff $V>\frac{1}{\sqrt{2}}$ in the context of $1$SDI scenario.
\label{ppnp3}
\end{proposition}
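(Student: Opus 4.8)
The plan is to follow the template of the proof of Proposition \ref{ppnp1} (the Svetlichny case), since the logical skeleton is identical once Lemma \ref{gscsmf} is in hand. By the equivalence noted in the definition of genuine tripartite steering in Section \ref{tes}, detecting genuine tripartite steering in the $1$SDI scenario is precisely the failure of the StLHS decomposition in Eq. (\ref{NLHS}), which in turn is equivalent to the correlation not being reproducible by any biseparable state in $\mathbb{C}^2 \otimes \mathbb{C}^2 \otimes \mathbb{C}^d$. So the proof reduces to pinning down, as a function of $V$, whether such a model exists.

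First I would treat the regime $V > \frac{1}{\sqrt{2}}$. Here the Mermin family attains a Mermin value $\braket{M} = 4V > 2\sqrt{2}$, exceeding the biseparable bound $2\sqrt{2}$ of Ref. \cite{SU01}, and therefore certifies genuine tripartite entanglement in a fully device-independent way. Fully device-independent certification is the strongest form of certification: it rules out every biseparable decomposition with arbitrary measurements and arbitrary local dimensions, in particular every biseparable state in $\mathbb{C}^2 \otimes \mathbb{C}^2 \otimes \mathbb{C}^d$. Hence for $V > \frac{1}{\sqrt{2}}$ the correlation admits no StLHS model of the form (\ref{NLHS}), which by definition means it detects genuine tripartite steering in the $1$SDI scenario.

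For the converse regime $0 < V \le \frac{1}{\sqrt{2}}$ I would invoke Lemma \ref{gscsmf} directly: it supplies an explicit StLHS model of the form (\ref{NLHS}) for the Mermin family throughout this interval. The existence of such a model means, by definition, that genuine tripartite steering is \emph{not} detected there. (Note that failing the StLHS model also entails failing the more restrictive fully LHS-LHV model (\ref{ALHS}), so the genuine-steering conclusion for $V > \frac{1}{\sqrt{2}}$ is consistent with ordinary tripartite steering being detected as well.) Combining the two regimes yields the claimed equivalence, with the transition located exactly at $V = \frac{1}{\sqrt{2}}$.

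I expect the only genuinely delicate point to be the implication ``fully device-independent genuine-entanglement certification $\Rightarrow$ no StLHS model in the $1$SDI scenario.'' This is not a fresh computation but a monotonicity-of-assumptions argument: passing from a fully device-independent description to the $1$SDI description only strengthens the assumptions on Alice's and Bob's devices, so a genuine-entanglement certificate valid device-independently is a fortiori valid in the $1$SDI scenario. All of the hard analytic work, namely the explicit construction of the StLHS model for $V \le \frac{1}{\sqrt{2}}$, has already been discharged inside Lemma \ref{gscsmf}, so the proposition itself amounts to glueing that lemma to the known Mermin-inequality threshold of Ref. \cite{SU01}.
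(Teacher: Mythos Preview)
Your proposal is correct and follows essentially the same approach as the paper's own proof: both directions are handled identically, using the Mermin value $\braket{M}=4V>2\sqrt{2}$ together with Ref.~\cite{SU01} to rule out any StLHS model for $V>\frac{1}{\sqrt{2}}$, and invoking Lemma~\ref{gscsmf} to exhibit an explicit StLHS model for $0<V\le\frac{1}{\sqrt{2}}$. Your added remark about monotonicity of assumptions (fully device-independent $\Rightarrow$ $1$SDI) just makes explicit the step the paper states in one sentence.
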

\begin{proof}
 Since the Mermin family violates the Mermin inequality more than $2\sqrt{2}$ for $V>\frac{1}{\sqrt{2}}$, it certifies genuine tripartite entanglement 
 in the fully device independent scenario in that range \cite{SU01}.  Hence, the Mermin family certifies genuine tripartite entanglement in our $1$SDI scenario as well for $V>\frac{1}{\sqrt{2}}$. This implies that, for $V>\frac{1}{\sqrt{2}}$, it does not have a steering LHS-LHV model as in Eq. (\ref{NLHS}) in our $1$SDI scenario. On the other hand, following Lemma \ref{gscsmf} we can state that the Mermin family has a steering LHS-LHV model as in Eq. (\ref{NLHS}) in the range $0 < V \le\frac{1}{\sqrt{2}}$. The Mermin family, therefore,  detects genuine tripartite steering 
iff $V>\frac{1}{\sqrt{2}}$.
\end{proof}

\begin{proposition}
The Mermin family  detects  tripartite steering 
iff $V>\frac{1}{\sqrt{2}}$ in the context of our $1$SDI scenario. 
\label{ppnp4}
\end{proposition}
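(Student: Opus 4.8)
The plan is to mirror the proof of Proposition \ref{p2} (the Svetlichny case), since by the definition given in Section \ref{tes} the Mermin family detects tripartite steering in the $1$SDI scenario precisely when two conditions hold simultaneously: (i) the correlation admits no fully LHS-LHV model as in Eq.~(\ref{ALHS}), and (ii) entanglement is detected across the Charlie $|$ Alice-Bob bipartition. Condition (i) is already settled by Lemma \ref{gscsmf}, which states that the Mermin family fails to admit a fully LHS-LHV model exactly when $V>\frac{1}{2\sqrt{2}}$. So the entire argument reduces to pinning down the range of $V$ for which condition (ii) holds, and then intersecting the two ranges.

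For the ``if'' direction of condition (ii), I would invoke the fact quoted in Section \ref{mts} that the Mermin family violates the Mermin inequality by more than $2\sqrt{2}$ for $V>\frac{1}{\sqrt{2}}$, which certifies genuine tripartite entanglement in a fully device-independent way \cite{SU01}. Genuine tripartite entanglement implies entanglement across every bipartition, in particular across Charlie $|$ Alice-Bob, so condition (ii) holds for $V>\frac{1}{\sqrt{2}}$.

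For the ``only if'' direction of condition (ii), I would use the explicit steering LHS-LHV model constructed in the proof of Lemma \ref{gscsmf}. The key observation is that this construction realizes the Mermin family for $V\le\frac{1}{\sqrt{2}}$ using only the first (Alice-Bob $|$ Charlie) term of the StLHS model (\ref{NLHS}), i.e.\ in the form $\sum_\lambda r_\lambda P(ab|A_xB_y,\rho^\lambda_{AB})P_\lambda(c|C_z)$. Such a decomposition is exactly what is reproduced by a $2\times2\times d$ state of the form (\ref{bisep1}), which is manifestly separable across the Charlie $|$ Alice-Bob cut; hence no entanglement is detected on that bipartition and condition (ii) fails for $V\le\frac{1}{\sqrt{2}}$. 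Combining this with condition (i), and since $\frac{1}{\sqrt{2}}>\frac{1}{2\sqrt{2}}$, the intersection of the two ranges is $V>\frac{1}{\sqrt{2}}$, which is the claim.

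The main obstacle I anticipate is the ``only if'' half of condition (ii): it is not enough that the Mermin family merely admits \emph{some} StLHS model (which by Lemma \ref{gscsmf} it does for all $V\le\frac{1}{\sqrt{2}}$), because a generic StLHS decomposition mixes all three bipartite cuts and need not correspond to a state separable across Charlie $|$ Alice-Bob specifically. One has to verify that the particular decomposition furnished in the Lemma's proof collapses to the single Alice-Bob $|$ Charlie term, so that the associated state is genuinely of the form (\ref{bisep1}). This is precisely the step that separates ``has an StLHS model'' from ``fails to detect entanglement across the relevant cut,'' and it is where the permutation symmetry of the Mermin family and the explicit local-hidden-state assignment from the Lemma must be used.
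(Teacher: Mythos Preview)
Your proposal is correct and follows essentially the same approach as the paper's own proof: both use Lemma~\ref{gscsmf} for the fully LHS-LHV part, invoke the Mermin-inequality violation above $2\sqrt{2}$ (via \cite{SU01}) to certify entanglement across the Charlie\,$|$\,Alice-Bob cut for $V>\frac{1}{\sqrt{2}}$, and use the explicit single-cut decomposition from the Lemma's proof to exhibit a biseparable state of the form~(\ref{bisep1}) for $V\le\frac{1}{\sqrt{2}}$. Your final paragraph correctly identifies the one nontrivial point---that the Lemma's model uses only the Alice-Bob\,$|$\,Charlie term---and this is exactly what the paper relies on as well.
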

\begin{proof}
Mermin family detects entanglement between Charlie and Alice-Bob
for $V > \frac{1}{\sqrt{2}}$ as it violates the Mermin inequality more than $2\sqrt{2}$  in this range \cite{SU01}.  Moreover, the steering LHS-LHV model given in the proof of Lemma \ref{gscsmf} for the Mermin family implies that, for $V\le \frac{1}{\sqrt{2}}$, it
can be reproduced by a $2 \times 2 \times d$ biseparable state of the 
form given by Eq. (\ref{bisep1}). Therefore, the Mermin family detects entanglement between Charlie and Alice-Bob
 iff $V > \frac{1}{\sqrt{2}}$.
On the other hand, in the context of our $1$SDI scenario, the Mermin family does not have a fully LHS-LHV model as in Eq. (\ref{ALHS}) 
 for $V > \frac{1}{2\sqrt{2}}$ following Lemma \ref{gscsmf}. Combining these two facts we can state that the Mermin family detects entanglement between Charlie and Alice-Bob and does not have a fully LHS-LHV model as in Eq. (\ref{ALHS}) in the range $V>\frac{1}{\sqrt{2}}$ in our $1$SDI scenario. Hence, in the context of our $1$SDI scenario,
the Mermin family   detects  tripartite steering 
iff $V>\frac{1}{\sqrt{2}}$. 
\end{proof}

From the Propositions \ref{ppnp3} and \ref{ppnp4}, we observe the following two salient features: $1$) in our 1SDI scenario, the Mermin family does not detect tripartite steering in the range $\frac{1}{2 \sqrt{2}}  < V \leq \frac{1}{\sqrt{2}}$ despite it does not have a fully LHS-LHV model in this range and $2$) the ranges in which the Mermin family detects tripartite steering and genuine tripartite steering in our 1SDI scenario are the same.

\subsection{$2$SDI scenario}
We will now study tripartite steering of the Mermin family in our $2$SDI scenario where Bob and Charlie perform two dichotomic black-box measurements to prepare conditional single qubit states on 
Alice's side  on which Alice performs two mutually unbiased  qubit  measurements.
We are interested to find out in which ranges the Mermin family detects genuine  tripartite steering and tripartite steering in the context of this $2$SDI scenario.
\begin{proposition}
 The Mermin family detects genuine tripartite steering in our $2$SDI scenario iff $V>\frac{1}{\sqrt{2}}$.
\end{proposition}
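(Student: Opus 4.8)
The plan is to mirror the structure of the proof of the corresponding Svetlichny proposition in the $2$SDI scenario, splitting the parameter range at $V=\frac{1}{\sqrt{2}}$ and establishing each direction of the equivalence separately. The decisive fact in both cases is not the particular form of any StLHS model (Eq.~(\ref{NLHS}) or Eq.~(\ref{NLHS1}) are scenario-specific), but rather whether the correlation can be reproduced by a \emph{biseparable state}, since the $2$SDI definition certifies genuine tripartite steering precisely when the assemblage $\sigma^A_{bc|yz}$ of Eq.~(\ref{ass1}) cannot be produced by a biseparable state together with detection of $A|BC$ entanglement.

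First I would treat the range $0<V\le\frac{1}{\sqrt{2}}$. Here I would invoke the proof of Lemma~\ref{gscsmf}, which exhibits a $2\times 2\times d$ biseparable state of the form (\ref{bisep1}), namely $\rho^{ABC}=\sum_{\lambda=0}^{3} r_\lambda\,\rho_{AB}^\lambda\otimes\ketbra{\lambda}{\lambda}$, reproducing the Mermin family correlation. The point is that the joint distribution $P(abc|A_xB_yC_z)$ is the same object in both the $1$SDI and $2$SDI scenarios, and that by embedding Bob's qubit into $\mathbb{C}^d$ this state is also biseparable in $\mathbb{C}^2\otimes\mathbb{C}^d\otimes\mathbb{C}^d$ of the form (\ref{bisep}). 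Hence the Alice-side assemblage (\ref{ass1}) is reproducible by a biseparable state, and by the $2$SDI definition the Mermin family cannot detect genuine tripartite steering in this range.

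Second I would treat the range $V>\frac{1}{\sqrt{2}}$. Here the Mermin family violates the Mermin inequality by more than $2\sqrt{2}$, certifying genuine tripartite entanglement in a fully device-independent way \cite{SU01}. Since device-independent certification assumes nothing about any party's measurements, it remains valid when Alice's device is trusted, so the shared state is genuinely entangled and its assemblage cannot arise from any biseparable state in $\mathbb{C}^2\otimes\mathbb{C}^d\otimes\mathbb{C}^d$; genuine entanglement moreover implies entanglement across the $A|BC$ cut, meeting the first clause of the $2$SDI definition. Both requirements hold, so genuine tripartite steering is detected for $V>\frac{1}{\sqrt{2}}$, and combining the two ranges yields the claimed equivalence.

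The step I expect to be the main obstacle is the transfer of the biseparable decomposition from the $1$SDI to the $2$SDI setting: one must argue that the reproducing state of Lemma~\ref{gscsmf}, originally designed with trusted Alice and Bob, still witnesses biseparability once Bob's measurement is treated as a black box, i.e.\ that the \emph{same} $\rho^{ABC}$ reproduces the full correlation and hence the Alice-side assemblage (\ref{ass1}). Once this scenario-independence of the reproducing state is established, both directions follow routinely from Lemma~\ref{gscsmf} and the device-independent violation \cite{SU01}.
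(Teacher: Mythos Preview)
Your proposal is correct and follows essentially the same approach as the paper: both directions hinge on (i) the biseparable $2\times 2\times d$ state of the form~(\ref{bisep1}) furnished by Lemma~\ref{gscsmf} for $V\le\frac{1}{\sqrt{2}}$, and (ii) the fully device-independent certification of genuine tripartite entanglement for $V>\frac{1}{\sqrt{2}}$ via the Mermin inequality exceeding $2\sqrt{2}$~\cite{SU01}. Your additional remarks on embedding Bob's qubit into $\mathbb{C}^d$ and on the $A|BC$ entanglement clause make explicit what the paper leaves implicit, but the argument is otherwise identical.
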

\begin{proof}
 Note that the Mermin family can be reproduced by a $2 \times 2 \times d$ dimensional biseparable state of the form given in Eq. (\ref{bisep1}) for $V\le \frac{1}{\sqrt{2}}$.
 This implies that it  does not detect genuine tripartite entanglement in the range $V\le \frac{1}{\sqrt{2}}$ in our $2$SDI scenario. On the other hand, the Mermin family detects genuine tripartite entanglement for $V>\frac{1}{\sqrt{2}}$ in the fully device independent scenario as the Mermin family violates the Mermin inequality more than $2\sqrt{2}$ in that range \cite{SU01}. Hence, the Mermin family detects genuine tripartite entanglement in our $2$SDI scenario as well for $V>\frac{1}{\sqrt{2}}$. The Mermin family, therefore, detects 
 genuine tripartite steering in our $2$SDI scenario iff $V>\frac{1}{\sqrt{2}}$.
 \end{proof}
 
 \begin{proposition}
   The Mermin family detects  tripartite steering in our $2$SDI scenario for $V> \frac{1}{2}$.
 \end{proposition}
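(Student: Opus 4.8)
The plan is to mirror the proof of Proposition~\ref{p4}, replacing the Svetlichny operator by the Mermin operator $M$ of Eq.~(\ref{MI0}). Concretely, I would first establish the $2$SDI steering inequality
\be
\braket{M}_{2 \times ? \times ?} \overset{\mathrm{LHS}}{\le} 2,
\ee
where $2\times ? \times ?$ indicates that Alice performs two mutually unbiased qubit measurements while Bob and Charlie perform black-box measurements, and then show that the Mermin family saturates $\braket{M}=4V$, so that the bound is violated precisely for $V>\frac{1}{2}$. By the $2$SDI definition of Section~\ref{2SDID}, a correlation detects tripartite steering as soon as it admits no fully LHS-LHV model (\ref{ALHS1}), so violating the above inequality is exactly what is required.

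The substantive step is the LHS bound. Starting from model (\ref{ALHS1}), at each hidden-variable value $\lambda$ Bob's and Charlie's black boxes contribute the expectation values $\beta_y:=\sum_b(-1)^bP_\lambda(b|B_y)\in[-1,1]$ and $\gamma_z:=\sum_c(-1)^cP_\lambda(c|C_z)\in[-1,1]$, which factorise across the $B|C$ cut because the model is local there. Collecting Alice's two trusted qubit observables, the Mermin expectation at level $\lambda$ becomes $\braket{M}_\lambda=\braket{A_0}_\lambda\,u+\braket{A_1}_\lambda\,v$ with $u=\beta_0\gamma_1+\beta_1\gamma_0$ and $v=\beta_0\gamma_0-\beta_1\gamma_1$. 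Maximising over Alice's qubit state gives $\braket{M}_\lambda\le\sqrt{u^2+v^2}$, the largest eigenvalue of $uA_0+vA_1$ for mutually unbiased $A_0,A_1$. The key identity is
\be
u^2+v^2=(\beta_0^2+\beta_1^2)(\gamma_0^2+\gamma_1^2)\le 4,
\ee
so that $\braket{M}_\lambda\le 2$; convexity of the average over $\lambda$ then gives $\braket{M}\le 2$ for every model of the form (\ref{ALHS1}).

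For the family value, I would insert Eq.~(\ref{mfameq}) into $\braket{A_xB_yC_z}=\sum_{abc}(-1)^{a\oplus b\oplus c}P^V_{MF}$. The factor $\delta_{x\oplus y\oplus1,z}$ equals unity on each of the four settings appearing in $M$, and on those settings the sign $(-1)^{xy\oplus yz\oplus xz}$ combines with the corresponding $\pm$ coefficient of $M$ so that each of the four terms contributes $+V$; hence $\braket{M}=4V$. Therefore $\braket{M}>2\iff V>\frac{1}{2}$, the Mermin family violates the steering inequality, and so possesses no fully LHS-LHV model (\ref{ALHS1}) for $V>\frac{1}{2}$, which is precisely the detection of tripartite steering in the $2$SDI scenario.

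The only nontrivial point is the LHS bound, and within it the identity $u^2+v^2=(\beta_0^2+\beta_1^2)(\gamma_0^2+\gamma_1^2)$, which collapses the optimisation to the single-qubit value $\le 2$; everything else is a direct evaluation. I note that one could bypass the inequality entirely: model (\ref{ALHS1}) is a special case of the fully LHV model (\ref{FLHV}), with Alice's quantum response $P(a|A_x,\rho_A^\lambda)$ playing the role of a local response $P_\lambda(a|A_x)$, so the Bell nonlocality of the Mermin family for $V>\frac{1}{2}$ (its violation of the MI (\ref{MI0})) already rules out any model of the form (\ref{ALHS1}) and thus certifies $2$SDI tripartite steering in the same range.
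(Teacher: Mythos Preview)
Your proof is correct and follows the same approach as the paper: both invoke the $2$SDI steering inequality $\braket{M}_{2\times ?\times ?}\overset{\mathrm{LHS}}{\le}2$ and observe that the Mermin family gives $\braket{M}=4V$, hence violation for $V>\tfrac{1}{2}$. The only difference is that the paper cites Ref.~\cite{UFNL} for the bound, whereas you supply a self-contained derivation via the identity $u^{2}+v^{2}=(\beta_{0}^{2}+\beta_{1}^{2})(\gamma_{0}^{2}+\gamma_{1}^{2})$ together with the mutually-unbiased-qubit maximisation $\braket{A_{0}}u+\braket{A_{1}}v\le\sqrt{u^{2}+v^{2}}$; this is exactly the mechanism behind the cited result, so nothing genuinely new is introduced. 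Your closing remark---that the fully LHS-LHV model (\ref{ALHS1}) is a special case of the fully LHV model (\ref{FLHV}), so Bell nonlocality of the Mermin family for $V>\tfrac{1}{2}$ already precludes (\ref{ALHS1})---is a valid and slightly quicker alternative that bypasses the need for the qubit-specific bound altogether.
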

\begin{proof}
 In Ref. \cite{UFNL}, it has been shown that the violation of the following inequality (Eq. ($21$) in \cite{UFNL} with $N=3$ and $T=1$),
\ba \label{MIAEPR}
\braket{M}_{2\times ? \times ?} \overset{\mathrm{LHS}}{\le} 2,
\ea
detects tripartite steering in our $2$SDI scenario. Here, $M$ is the Mermin operator given in the Mermin inequality (\ref{MI0}), $2\times ? \times ?$ indicates that Alice performs qubit measurements
while Bob and Charlie perform black-box measurements. Note that the Mermin family  
violates the above steering inequality for $V> \frac{1}{2}$. Thus, the Mermin family detects tripartite steering
for $V>\frac{1}{2}$ in the $2$SDI scenario.
\end{proof}

From aforementioned Propositions we observe the following two salient features:
1) the ranges in which the Mermin family detects tripartite steering and genuine tripartite steering in our $2$SDI 
scenario are different and 2) the Mermin family detects more tripartite entangled states to be tripartite steerable in the $2$SDI scenario than
in the case of $1$SDI scenario. Now we want to state the following important observation.

\begin{observation}
Quantum violation of the  tripartite steering inequality (\ref{MIAEPR}) by $2 \times 2 \times 2$ systems  certifies genuine entanglement in that $2 \times 2 \times 2$ systems, even if genuine nonlocality or genuine steering is not detected.
\end{observation}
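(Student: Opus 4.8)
The plan is to mirror the argument already used for the Svetlichny observation: I will produce a device-dependent biseparability inequality for the Mermin operator whose upper bound over biseparable $2\times 2\times 2$ states coincides with the local-hidden-state bound $2$ of the steering inequality (\ref{MIAEPR}), and then exploit the equality of the two bounds exactly as in Eq. (\ref{deeqbi}).

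First I would establish the Mermin biseparability inequality
\be
\braket{M}_{2\times 2 \times 2}\overset{\mathrm{Bi-sep}}{\le}2,
\ee
where $\braket{M}_{2\times 2 \times 2}$ is the Mermin operator built from the incompatible (mutually unbiased) qubit observables of Eq. (\ref{mfameq}). By convexity it suffices to bound $\braket{M}$ on a state separable across a single fixed cut. For the $A|BC$ cut I would regroup $M=A_0(B_0C_1+B_1C_0)+A_1(B_0C_0-B_1C_1)$, so that on $\rho_A\otimes\rho_{BC}$ one gets $\braket{M}=a_0\braket{B_0C_1+B_1C_0}+a_1\braket{B_0C_0-B_1C_1}$ with $a_x=\tr(A_x\rho_A)$. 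Mutual unbiasedness of $A_0,A_1$ gives $a_0^2+a_1^2\le1$, and a Cauchy--Schwarz step reduces the task to bounding $P^2+Q^2$, where $P,Q$ are the two CHSH-type combinations formed from the two-qubit correlation matrix of $\rho_{BC}$. The decisive estimate is $P^2+Q^2\le 2\lVert T'\rVert_F^2\le4$, with $T'$ the relevant $2\times2$ block (its two singular values are each at most $1$, so $\lVert T'\rVert_F^2\le2$); this yields $\braket{M}\le\sqrt{a_0^2+a_1^2}\,\sqrt{P^2+Q^2}\le2$. Regrouping $M$ by the single party in the other two cuts, $AB|C$ and $AC|B$, produces the identical bound, and convexity then extends $\braket{M}\le2$ to the full biseparable set.

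With this inequality in place the conclusion is immediate and follows the one-line comparison of Eq. (\ref{deeqbi}): the steering inequality (\ref{MIAEPR}) has local-hidden-state bound $2$, so any $2\times 2\times 2$ quantum realization violating it obeys $\braket{M}>2$ for the same incompatible qubit observables. Since every biseparable $2\times 2\times 2$ state satisfies $\braket{M}_{2\times 2\times 2}\le2$ for those observables, such a realization must be genuinely tripartite entangled. This certifies genuine entanglement even when the correlation is neither genuinely nonlocal nor genuinely steerable, precisely as in the window $\frac{1}{2}<V\le\frac{1}{\sqrt{2}}$ for the Mermin family.

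The main obstacle is the biseparability bound itself, since everything downstream is the ``equal bounds'' argument and all the content lives in proving $\braket{M}_{2\times 2\times 2}\le2$. The delicate point is that this fails for generic incompatible qubit observables: it relies on each party's pair being mutually unbiased, which is exactly what forces both $a_0^2+a_1^2\le1$ and the two-qubit estimate $P^2+Q^2\le4$. I would therefore carry the mutual-unbiasedness hypothesis explicitly through the Cauchy--Schwarz and correlation-matrix steps, and relegate the routine optimization of $P^2+Q^2$ over valid two-qubit correlation matrices to an appendix analogous to Appendix \ref{bisepapp}.
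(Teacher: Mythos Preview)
Your proposal is correct and follows the same overall logic as the paper: establish the Mermin biseparability inequality $\braket{M}_{2\times 2\times 2}\overset{\mathrm{Bi\text{-}sep}}{\le}2$ for three-qubit biseparable states with mutually unbiased observables on each side, then use the coincidence of this bound with the LHS bound $2$ of (\ref{MIAEPR}) to conclude that any $2\times2\times2$ violation certifies genuine entanglement. The only difference is in how the biseparability bound is obtained: the paper does not derive it but imports it from Ref.~\cite{TGS05} (and points to an Appendix-\ref{bisepapp}-style argument as an alternative), whereas you supply an explicit Cauchy--Schwarz plus correlation-matrix-Frobenius-norm proof. Your route is a bit more self-contained, and you correctly isolate the crucial hypothesis --- mutual unbiasedness on \emph{all three} sides --- that makes the bound $2$ (rather than $2\sqrt{2}$) hold; the paper's proof invokes the same hypothesis via the citation to \cite{TGS05}.
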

\begin{proof}
In Ref. \cite{TGS05}, it was shown that the Mermin inequality detect genuine entanglement of three-qubit systems in the scenario where all three parties perform two mutually unbiased qubit measurements.
This implies that the violation of the inequality (\ref{MIAEPR}) implies the presence of 
genuine entanglement if all three parties performs  qubit measurements in mutually unbiased bases.
Similar to the derivation of  Svetlichny biseparability inequality presented in Appendix \ref{bisepapp}, 
one can obtain following the Mermin biseparability inequality:
\be
\braket{M}_{2\times 2 \times 2} \overset{\mathrm{Bi-sep}}{\le} 2, 
\label{bosepnewequ}
\ee
whose violation detects genuine tripartite entanglement of $ 2 \times 2 \times 2$ systems.
Here, $\braket{M}_{2\times 2 \times 2}$ denotes the Mermin operator with the measurement observables on each side being incompatible qubit
measurements. Note that quantum violation of tripartite steering inequality (\ref{MIAEPR}) by $2 \times 2 \times 2$ systems implies quantum violation of the biseparability inequality (\ref{bosepnewequ}) by that $2 \times 2 \times 2$ systems. Because, for both of these two inequalities the upper bounds are the same. Hence the claim.
\end{proof} 

We have illustrated the above results with the Mermin family in Fig. \ref{MeFF}.

\section{Conclusion}\label{Cnc}
In this work, we have studied tripartite EPR steering of quantum correlations arising from two local measurements on each side in the two types of partially device-independent scenarios: $1$-sided device-independent scenario where one of the parties performs untrusted measurements while the other two parties perform trusted measurements and $2$-sided device-independent scenario where one of the parties performs trusted measurements while the other two parties perform untrusted measurements.

We have studied tripartite steering  and genuine tripartite steering in the $1$-sided device-independent framework in the following scenarios: one of the parties performs two dichotomic black-box measurements and the other two parties perform incompatible qubit measurements that
demonstrate Bell nonlocality \cite{CHS+69} in one of the types or perform incompatible measurements that demonstrate EPR steering without Bell nonlocality \cite{UFNL,jeba} in the other type.
In the context of these two scenarios, we have studied tripartite steering of two families of quantum correlations called 
Svetlichny family and Mermin family, respectively. We have shown that the ranges in which these families detect tripartite steering and genuine tripartite steering 
are the same. 

On the other hand, in the $2$-sided device-independent framework, the ranges in which the Svetlichny family and Mermin family detect tripartite steering and genuine tripartite steering are different. These studies reveal that tripartite steering in the $2$-sided device-independent scenario is weaker than tripartite steering in the $1$-sided device-independent scenario. That is
the Svetlichny family and Mermin family in the $2$-sided device-independent framework detect tripartite entanglement  for a larger region than that in the $1$-sided device-independent framework.  
Using biseparability inequality, it has been demonstrated that tripartite steering in the $2$-sided device-independent framework implies the presence of genuine tripartite entanglement of $2\times 2 \times 2$ quantum system, even if the correlation does not exhibit genuine nonlocality or genuine steering.

Similar to our tripartite $1$-sided device-independent scenario considered in Section \ref{sts}  where the trusted parties Alice and Bob perform  incompatible qubit measurements that demonstrate
Bell-CHSH inequality violation, in Ref. \cite{KZ96}, the authors considered a tripartite measurement scenario in which Alice and Bob perform incompatible qubit measurements that demonstrate maximal
Bell-CHSH inequality violation. In the latter scenario, the authors studied an interesting feature of genuinely tripartite entangled states called ``entangled entanglement" in which entanglement between measurement choices of Charlie and entanglement of the conditional states prepared on Alice and Bob's side by these measurement choices occurs. On the other hand, in our work, we have studied steerability between measurement choices of Charlie and entanglement of the conditional states prepared on Alice and Bob's side by these measurements on genuinely tripartite entangled states.

Note that in Ref. \cite{stm3} the definition of genuine tripartite steering was presented and it was experimentally demonstrated in \cite{stm4}. In their approach Alice, Bob and Charlie are all assumed to perform characterised measurements at some point (i. e., the trusted and untrusted parties are not fixed in their definition of genuine tripartite steering). On the other hand, trusted and untrusted parties are fixed (Alice, Bob are trusted and Charlie is untrusted in $1$SDI scenario;  Alice is trusted and Bob, Charlie are untrusted in $2$SDI scenario) in the definitions of tripartite steering and genuine tripartite steering presented in this paper which is an advancement in the context of the notion of tripartite steering. In the steering scenarios considered in Ref. \cite{cava} noisy GHZ state demonstrates genuine tripartite steering in $1$SDI scenario in a larger region compared to that in $2$SDI scenario. On the other hand, the two examples of quantum correlations presented in this study reveal that noisy GHZ state demonstrates tripartite steering (not genuine) in $2$SDI scenario in a larger region compared to that in $1$SDI scenario. One important point to be stressed here is that the procedures to detect genuine tripartite steering adopted in Refs. \cite{cava, stm6, stm7} are based on numerical calculations with the help of semidefinite program (SDP). But the advantage of our study is that the steering LHS-LHV model and the fully LHS-LHV model of the two families of correlations are derived analytically, not using SDP. The application of tripartite steering in the context of randomness certification has been studied in \cite{stm6} using SDP. It is worth to be studied in future what advantage one can gain in the context of randomness certification in tripartite steering scenario considered by us in the present study.

In Ref. \cite{jeba}, the author proposed two  inequalities for detecting genuine steering in the Svetlichny-type and Mermin-type one-sided device-independent scenarios. We have demonstrated that these inequalities do not detect genuine steering and they detect  tripartite steering of $2 \times d \times d$ systems in the $2$-sided device-independent framework. Further, the author argued that the violation of one of these inequalities imply genuine entanglement if one assumes only dimension of the trusted parties to be qubit
dimension. However, the present study demonstrates that the violation of these inequalities do not detect genuine entanglement in this context, on the other hand, the violation of those inequalities may imply genuine entanglement  in the scenario  where the dimensions of all three parties are assumed to be qubit dimension.

\section*{Acknowledgement} 
Authors are thankful to the anonymous referee for drawing their attention to Ref. \cite{KZ96}.
Authors are thankful to Prof. Guruprasad Kar and Dr. Nirman Ganguly for fruitful discussions. CJ is thankful to Prof. Paul Skrzypczyk for useful discussions
during the 657.WE-Heraeus Seminar ``Quantum Correlations in Space and Time''. 
DD acknowledges the financial support from University Grants Commission (UGC), Government of India. BB, CJ and DS acknowledge the financial support from project SR/S2/LOP-08/2013 of the Department of Science and Technology (DST), government of India. 

\appendix
\section{Proof for Lemma \ref{gscssf}}\label{PSLHVSF}
We consider the following classical simulation scenario to demonstrate in which range the Svetlichny family
 has a steering LHS-LHV model as in Eq. (\ref{NLHS})  and  a fully LHS-LHV model as in Eq. (\ref{ALHS}) in our $1$SDI scenario considered in Section \ref{sts}: 
\begin{scenario} \label{csss}
  Charlie generates his outcomes  by using classical variable $\lambda$ which he shares with Alice-Bob.
  Alice and Bob share a two-qubit system for each value of $\lambda$ and perform pair  of incompatible qubit measurements that demonstrate 
  Bell nonlocality of certain two-qubit states; for instance, the singlet state.
\end{scenario}

For $0 < V \le\frac{1}{\sqrt{2}}$, the Svetlichny family given by Eq.(\ref{SvF}) can be written as
\begin{equation}
\label{lhv-lhsSv}
P_{SvF}^{V} (abc|A_x B_y C_z)
= \sum_{\lambda=0}^{3} r_{\lambda} P(a b |A_x B_y, \rho^{\lambda}_{AB}) P_{\lambda} (c|C_z)
\end{equation}
where $r_0$ = $r_1$ = $r_2$ = $r_3$ = $\frac{1}{4}$, and \\
$P_{0} (c|C_z)$ = $P_D^{00}$, $P_{1} (c|C_z)$ = $P_D^{01}$, $P_{2} (c|C_z)$ = $P_D^{10}$, $P_{3} (c|C_z)$ = $P_D^{11}$,\\
here, \begin{equation}
P_D^{\alpha\beta}(c|C_z)=\left\{
\begin{array}{lr}
1, & c=\alpha z\oplus \beta\\
0 , & \text{otherwise}\\
\end{array}
\right. 
\end{equation}
Here, $\alpha, \beta \in \{0, 1\}$. The four bipartite distributions $P(a b |A_x B_y, \rho^{\lambda}_{AB})$ in Eq. (\ref{lhv-lhsSv})
are given as follows:
\begin{enumerate}
 \item For $\lambda=0$, it is given by,

 \begin{equation}
 P(a b |A_x B_y, \rho^{0}_{AB}) = \bordermatrix{
\frac{ab}{xy} & 00 & 01 & 10 & 11 \cr
00 & \frac{1+ \sqrt{2} V}{4} & \frac{1- \sqrt{2} V}{4} & \frac{1- \sqrt{2} V}{4} & \frac{1+ \sqrt{2} V}{4} \cr
01 & \frac{1}{4} & \frac{1}{4} & \frac{1}{4} & \frac{1}{4} \cr
10 & \frac{1}{4} & \frac{1}{4} & \frac{1}{4} & \frac{1}{4} \cr
11 & \frac{1- \sqrt{2} V}{4} & \frac{1+ \sqrt{2} V}{4} & \frac{1+ \sqrt{2} V}{4} & \frac{1- \sqrt{2} V}{4} },
\end{equation}
where each row and column corresponds to a fixed measurement $(xy)$ and a fixed outcome $(ab)$ respectively. Throughout the paper we will follow the same convention. Note that, each of the probability distributions must satisfy $0 \leq P(a b |A_x B_y, \rho^{0}_{AB}) \leq 1$, which implies that $0< V \leq \frac{1}{\sqrt{2}}$.

This  joint probability distribution at Alice and Bob's side can be reproduced by performing measurements of the observables corresponding to the operators $A_0=\sigma_x$, $A_1=\sigma_y$; and $B_0=\frac{\sigma_x-\sigma_y}{\sqrt{2}}$, $B_1=\frac{\sigma_x + \sigma_y}{\sqrt{2}}$  on the two-qubit state given by,
\begin{equation}
\label{states1}
| \psi_0 \rangle = \cos \theta |00 \rangle +  \frac{(1 - i) \sin \theta}{\sqrt{2}} |11 \rangle ,
\end{equation}  
with $ \sin 2 \theta = \sqrt{2} V$; $0 \leq \theta \leq \frac{\pi}{4}$. $|0\rangle$ and $|1\rangle$ are the eigenstates of $\sigma_z$ corresponding to the eigenvalues $+1$ and $-1$ respectively.

\item For $\lambda=1$, it is given by,

\begin{equation}
P(a b |A_x B_y, \rho^{1}_{AB}) = \begin{pmatrix}
\frac{1- \sqrt{2} V}{4} && \frac{1+ \sqrt{2} V}{4} && \frac{1+ \sqrt{2} V}{4} && \frac{1- \sqrt{2} V}{4}\\
\frac{1}{4} && \frac{1}{4} && \frac{1}{4} && \frac{1}{4} \\
\frac{1}{4} && \frac{1}{4} && \frac{1}{4} && \frac{1}{4} \\
\frac{1+ \sqrt{2} V}{4} && \frac{1- \sqrt{2} V}{4} && \frac{1- \sqrt{2} V}{4} && \frac{1+ \sqrt{2} V}{4}\\
\end{pmatrix}.
\end{equation}
Note that, each of the probability distributions must satisfy $0 \leq P(a b |A_x B_y, \rho^{1}_{AB}) \leq 1$, which implies that $0< V \leq \frac{1}{\sqrt{2}}$.\\

This joint probability distribution at Alice and Bob's side can be reproduced by performing measurements of the observables corresponding to the operators $A_0=\sigma_x$, $A_1=\sigma_y$; and $B_0=\frac{\sigma_x-\sigma_y}{\sqrt{2}}$, $B_1=\frac{\sigma_x + \sigma_y}{\sqrt{2}}$ on the two-qubit state given by,
\begin{equation}
\label{states2}
| \psi_1 \rangle = \cos \theta |00 \rangle - \frac{(1 - i) \sin \theta}{\sqrt{2}} |11 \rangle ,
\end{equation} 
with $ \sin 2 \theta = \sqrt{2} V$; $0 \leq \theta \leq \frac{\pi}{4}$.\\

\item For $\lambda=2$, it is given by,

\begin{equation}
P(a b |A_x B_y, \rho^{2}_{AB}) = \begin{pmatrix}
\frac{1}{4} && \frac{1}{4} && \frac{1}{4} && \frac{1}{4} \\
\frac{1+ \sqrt{2} V}{4} && \frac{1- \sqrt{2} V}{4} && \frac{1- \sqrt{2} V}{4} && \frac{1+ \sqrt{2} V}{4}\\
\frac{1+ \sqrt{2} V}{4} && \frac{1- \sqrt{2} V}{4} && \frac{1- \sqrt{2} V}{4} && \frac{1+ \sqrt{2} V}{4}\\
\frac{1}{4} && \frac{1}{4} && \frac{1}{4} && \frac{1}{4} \\
\end{pmatrix} .
\end{equation}
Note that, each of the probability distributions must satisfy $0 \leq P(a b |A_x B_y, \rho^{2}_{AB}) \leq 1$, which implies that $0< V \leq \frac{1}{\sqrt{2}}$.\\

This joint probability distribution at Alice and Bob's side can be reproduced by performing measurements of the observables corresponding to the operators $A_0=\sigma_x$, $A_1=\sigma_y$; and $B_0=\frac{\sigma_x-\sigma_y}{\sqrt{2}}$, $B_1=\frac{\sigma_x + \sigma_y}{\sqrt{2}}$ on the two-qubit  state given by,
\begin{equation}
\label{states3}
| \psi_2 \rangle = \cos \theta |00 \rangle +  \frac{(1 + i) \sin \theta}{\sqrt{2}} |11 \rangle ,
\end{equation} 
with $ \sin 2 \theta = \sqrt{2} V$; $0 \leq \theta \leq \frac{\pi}{4}$.\\

\item For $\lambda=3$, it is given by,

\begin{equation}
P(a b |A_x B_y, \rho^{3}_{AB}) = \begin{pmatrix}
\frac{1}{4} && \frac{1}{4} && \frac{1}{4} && \frac{1}{4} \\
\frac{1- \sqrt{2} V}{4} && \frac{1+ \sqrt{2} V}{4} && \frac{1+ \sqrt{2} V}{4} && \frac{1- \sqrt{2} V}{4}\\
\frac{1- \sqrt{2} V}{4} && \frac{1+ \sqrt{2} V}{4} && \frac{1+ \sqrt{2} V}{4} && \frac{1- \sqrt{2} V}{4}\\
\frac{1}{4} && \frac{1}{4} && \frac{1}{4} && \frac{1}{4} \\
\end{pmatrix}.
\end{equation}
Note that, each of the probability distributions must satisfy $0 \leq P(a b |A_x B_y, \rho^{3}_{AB}) \leq 1$, which implies that $0< V \leq \frac{1}{\sqrt{2}}$.\\

This joint probability distribution at Alice and Bob's side can be reproduced by performing measurements of the observables corresponding to the operators $A_0=\sigma_x$, $A_1=\sigma_y$; and $B_0=\frac{\sigma_x-\sigma_y}{\sqrt{2}}$, $B_1=\frac{\sigma_x + \sigma_y}{\sqrt{2}}$ on the two-qubit  state given by,
\begin{equation}
\label{states4}
| \psi_3 \rangle = \cos \theta |00 \rangle -  \frac{(1 + i) \sin \theta}{\sqrt{2}} |11 \rangle ,
\end{equation} 
with $ \sin 2 \theta = \sqrt{2} V$; $0 \leq \theta \leq \frac{\pi}{4}$.

\end{enumerate}

Here it can be easily checked that the aforementioned observables corresponding to the operators $A_0=\sigma_x$, $A_1=\sigma_y$; and $B_0=\frac{\sigma_x-\sigma_y}{\sqrt{2}}$, $B_1=\frac{\sigma_x + \sigma_y}{\sqrt{2}}$ used to reproduce the joint probability distributions at Alice and Bob's side can demonstrate nonlocality of the singlet state given by, $| \psi^{-} \rangle = \frac{1}{\sqrt{2}} ( | 01 \rangle - | 10 \rangle )$. Hence, the Svetlichny family given by Eq.(\ref{SvF}) has a steering LHS-LHV model as in Eq. (\ref{NLHS})  in the range $0 < V \le\frac{1}{\sqrt{2}}$ in Scenario \ref{csss}.

In the steering LHS-LHV model given for the Svetlichny family as in Eq.(\ref{lhv-lhsSv}), 
 the bipartite distributions   $P(a b |A_x B_y, \rho^{\lambda}_{AB})$ belong 
 to the BB84 family up to local reversible operations (LRO)
 \footnote{LRO is designed  \cite{lro} as follows: 
 Alice may relabel her inputs: $x \rightarrow x \oplus 1$, and she may relabel her outputs (conditionally on the input) : $a \rightarrow a \oplus \alpha x \oplus \beta$ ($\alpha, \beta \in \{0, 1\}$); Bob can perform similar operations.},
\begin{equation}
\label{bb84}
P_{BB84}(ab|A_xB_y) = \frac{1 + (-1)^{a \oplus b \oplus x.y} \delta_{x,y} W }{4}
\end{equation}
where $W=\sqrt{2}V$ is a real number such that $0<W\le 1$. In Ref. \cite{Koon},
it has been shown that the BB84 family certifies two-qubit entanglement iff $W>\frac{1}{2}$.
This implies that for $W\le \frac{1}{2}$, it can be reproduced by a two-qubit separable
state. Therefore, 
the bipartite distributions   $P(a b |A_x B_y, \rho^{\lambda}_{AB})$
in Eq. (\ref{lhv-lhsSv}) has a LHS-LHS decomposition for $V\le \frac{1}{2 \sqrt{2}}$.
This implies that the Svetlichny family can be reproduced by a fully LHS-LHV model,
\be
P_{SvF}^{V} (abc|A_x B_ y C_z)=\sum_\lambda q_\lambda P(a|A_x,\rho_A^\lambda)P(b|B_y,\rho_B^\lambda)P_\lambda(c|C_z), \label{ALHS333}
\ee
for $V \leq \frac{1}{2\sqrt{2}}$ in Scenario \ref{csss}.
Here, $P(a|A_x,\rho_A^\lambda)$ 
and $P(b|B_y,\rho_B^\lambda)$ are the distributions arising from the local hidden states 
$\rho_A^\lambda$ and $\rho_B^\lambda$ which are in  $\mathbb{C}^{2}$, respectively.

In Ref. \cite{UFNL}, it has been shown that violation of the following inequality (Eq. (22) in \cite{UFNL} with $N$ (Number of parties) $=3$ and $T$ (Number of trusted parties) $=2$):
\ba \label{eqappn1}
\braket{S}_{2\times 2 \times ?} \overset{\mathrm{LHS}}{\le} 2,
\ea
detects non-existence of fully LHS-LHV model in Scenario \ref{csss}. Here, $S$ is the Svetlichny operator given in the Svetlichny inequality (\ref{SI1}), $2\times 2 \times ?$ indicates that Alice and Bob perform qubit measurements
while Charlie performs black-box measurements. Note that the Svetlichny family  
violates the above inequality for $V>\frac{1}{2\sqrt{2}}$. Thus, the Svetlichny family does not have fully LHS-LHV model in the region $V>\frac{1}{2\sqrt{2}}$ in Scenario \ref{csss}. Hence, we can conclude that the Svetlichny family has fully LHS-LHV model iff $0 < V \leq \frac{1}{2\sqrt{2}}$ in Scenario \ref{csss}.

\section{Derivation of the Svetlichny biseparability inequality} \label{bisepapp}
Here we derive a biseparability inequality 
that detect 
genuine entanglement of three-qubit systems by using the Svetlichny operator in the scenario
where each party performs incompatible  qubit measurements. In this scenario, the tripartite correlations that can be reproduced
by a biseparable three-qubit state has  the following nonseparable LHS-LHS (NSLHS) model:
\begin{align}
P(abc|A_xB_yC_z)\!&=\!\sum_\lambda p_\lambda P(a|A_x,\rho^\lambda_A)P(bc|B_yC_z,\rho^{\lambda}_{BC})\nonumber \\
&+\!\sum_\lambda q_\lambda P(ac|A_xC_z,\rho^\lambda_{AC})P(b|B_y, \rho^\lambda_B)\!\nonumber \\
&+\!\sum_\lambda r_\lambda P(ab|A_xB_y,\rho^\lambda_{AB})P(c|C_z,\rho^\lambda_C), \label{NSLHS}
\end{align}
with $\sum_\lambda p_\lambda+\sum_\lambda q_\lambda+\sum_\lambda r_\lambda=1$. Here,
$P(a|A_x,\rho^\lambda_A)$, $P(b|B_y, \rho^\lambda_B)$ and $P(c|C_z, \rho^\lambda_C)$ are 
the distributions which can be reproduced
by the qubit states $\rho^\lambda_A$, $\rho^\lambda_B$ and $\rho^\lambda_C$, respectively, 
and $P_\lambda(bc|B_yC_z,\rho^{\lambda}_{BC})$, $P_\lambda(ac|A_xC_z,\rho^\lambda_{AC})$ and
$P_\lambda(ab|A_xB_y,\rho^\lambda_{AB})$ can be reproduced by the $2 \times 2$ states $\rho^{\lambda}_{BC}$, 
$\rho^\lambda_{AC}$ and $\rho^\lambda_{AB}$, respectively.
Note that in the model given by Eq. (\ref{NSLHS}), the bipartite distributions  at each 
$\lambda$ level may have nonseparability.

The Svetlichny operator can be rewritten  as follows:
\be
S=CHSH_{AB}C_1+CHSH'_{AB}C_0. \label{SIO}
\ee
 Here, $CHSH_{AB} = A_0B_0+A_0B_1+A_1B_0-A_1B_1$ is the canonical CHSH (Clauser-Horne-Shimony-Holt) operator \cite{CHS+69}  and $CHSH'_{AB}=-A_0B_0+A_0B_1+A_1B_0+A_1B_1$ is one of its equivalents.
 Note that the expectation value of the Svetlichny operator for the correlation which has 
the nonseparable LHS-LHS model as given in Eq. (\ref{NSLHS}) have the following form:
\begin{align}
&\sum_\lambda p_\lambda \braket{A_1}_{\rho^\lambda_A}\braket{CHSH_{BC}}_{\rho^\lambda_{BC}}+
 \sum_\lambda p_\lambda \braket{A_0}_{\rho^\lambda_A}\braket{CHSH'_{BC}}_{\rho^\lambda_{BC}} \nonumber \\
&+\sum_\lambda q_\lambda \braket{CHSH_{AC}}_{\rho^\lambda_{AC}}\braket{B_1}_{\rho^\lambda_B}+
\sum_\lambda q_\lambda\braket{CHSH'_{AC}}_{\rho^\lambda_{AC}}\braket{B_0}_{\rho^\lambda_B} \nonumber\\
&+\sum_\lambda r_\lambda \braket{CHSH_{AB}}_{\rho^\lambda_{AB}}\braket{C_1}_{\rho^\lambda_C}+
\sum_\lambda r_\lambda\braket{CHSH'_{AB}}_{\rho^\lambda_{AB}}\braket{C_0}_{\rho^\lambda_C}. \label{SObi}
\end{align}
Let us now argue that the above quantity is upper bounded by $2\sqrt{2}$.
Consider the first line of the decomposition given in Eq. (\ref{SObi}). 
Suppose Bob and Charlie's correlation at each $\lambda$ level of this line detects nonseparability.
Then $\pm\braket{CHSH_{BC}}_{\rho^\lambda_{BC}}\pm
\braket{CHSH'_{BC}}_{\rho^\lambda_{BC}} \le 2\sqrt{2}$.
Suppose Bob and Charlie's correlation at each $\lambda$ level has a LHS-LHS model.
Then also $\pm\braket{CHSH_{BC}}_{\rho^\lambda_{BC}}\pm
\braket{CHSH'_{BC}}_{\rho^\lambda_{BC}} \le 2\sqrt{2}$. In a similar way, considering the second line of the decomposition given in Eq. (\ref{SObi}), one can show that $\pm\braket{CHSH_{AC}}_{\rho^\lambda_{AC}}\pm
\braket{CHSH'_{AC}}_{\rho^\lambda_{AC}} \le 2\sqrt{2}$; and considering the third line of the decomposition given in Eq. (\ref{SObi}), one can show that $\pm\braket{CHSH_{AB}}_{\rho^\lambda_{AB}}\pm
\braket{CHSH'_{AB}}_{\rho^\lambda_{AB}} \le 2\sqrt{2}$. Therefore, any convex combination of the three above mentioned expression should be upper bounded by $2\sqrt{2}$. Hence,
we can conclude that in the Scenario where each party performs incompatible qubit measurements, the Svetlichny operator is upper bounded
by $2\sqrt{2}$ if the correlation has a nonseparable LHS-LHS model (\ref{NSLHS}).
Hence the following   inequality: 
\be
\braket{S}_{2\times 2 \times 2} \overset{\mathrm{Bi-sep}}{\le} 2\sqrt{2}, 
\ee
serves as the biseparability inequality whose violation detects genuine tripartite entanglement of $2 \times 2 \times 2$ systems.
Here, $\braket{S}_{2\times 2 \times 2}$ denotes the Svetlichny operator with the measurement observables on each side being incompatible qubit
measurements.

\section{Proof for Lemma \ref{gscsmf}}\label{PSLHVMF}
We consider the following classical simulation scenario to demonstrate in which range the Mermin family
 has a steering LHS-LHV model as in Eq. (\ref{NLHS})  and  a fully LHS-LHV model as in Eq. (\ref{ALHS}) in the $1$SDI scenario considered in Section \ref{mts}: 
\begin{scenario} \label{csms}
  Charlie generates his outcomes  by using classical variable $\lambda$ which he shares with Alice-Bob.
  Alice and Bob share a two-qubit system for each $\lambda$ and perform pair  of incompatible qubit measurements that demonstrate
  EPR steering without Bell nonlocality of certain two-qubit states \cite{UFNL,jeba}; for instance, the singlet state.
\end{scenario}

Following the steering LHV-LHS model of the Mermin family mentioned in Ref. \cite{DJB+17},
we can write down the following steering LHS-LHV model of the Mermin family in the range $0 < V \le\frac{1}{\sqrt{2}}$,
\begin{equation}
\label{lhvlhs}
P_{MF}^{V} (abc|A_x B_ y C_z) = \sum_{\lambda=0}^{3} r_{\lambda} P(a b |A_x B_y, \rho^{\lambda}_{AB}) P_{\lambda} (c|C_z),
\end{equation}
 as it is invariant under the permutations of the parties.
Here, $r_0$ = $r_1$ = $r_2$ = $r_3$ = $\frac{1}{4}$, and \\
$P_{0} (c|C_z)$ = $P_D^{00}$, $P_{1} (c|C_z)$ = $P_D^{01}$, $P_{2} (c|C_z)$ = $P_D^{10}$, $P_{3} (c|C_z)$ = $P_D^{11}$.

The bipartite distributions in the model (\ref{lhvlhs}) are given as follows:
\begin{enumerate}
 \item For $\lambda=0$, it is given by
\begin{equation}
P(a b |A_x B_y, \rho^{0}_{AB}) = \bordermatrix{
\frac{ab}{xy} & 00 & 01 & 10 & 11 \cr
00 & \frac{1+V}{4} & \frac{1-V}{4} & \frac{1-V}{4} & \frac{1+V}{4} \cr
01 & \frac{1+V}{4} & \frac{1-V}{4} & \frac{1-V}{4} & \frac{1+V}{4} \cr
10 & \frac{1+V}{4} & \frac{1-V}{4} & \frac{1-V}{4} & \frac{1+V}{4} \cr
11 & \frac{1-V}{4} & \frac{1+V}{4} & \frac{1+V}{4} & \frac{1-V}{4} },
\end{equation}
where each row and column corresponds to a fixed measurement $(xy)$ and a fixed outcome $(ab)$ respectively. This joint probability can be reproduced by performing the projective qubit measurements of the observables corresponding to the operators  $A_0=\sigma_x$, $A_1=\sigma_y$; and $B_0=\sigma_x$, $B_1=\sigma_y$ on the two-qubit state given by,
\begin{equation}
\label{state1}
| \psi_0 \rangle = \cos \theta |00 \rangle +  \frac{(1 + i) \sin \theta}{\sqrt{2}} |11 \rangle ,
\end{equation}
where, $0 \leq \theta \leq \frac{\pi}{4}$ with $\sin 2 \theta = \sqrt{2} V$; $|0\rangle$ and $|1\rangle$ are the eigenstates of $\sigma_z$ corresponding to the eigenvalues $+1$ and $-1$ respectively.

\item For $\lambda=1$, it is given by
\begin{center}
$ P(a b |A_x B_y, \rho^{1}_{AB}) = \begin{pmatrix}
\frac{1-V}{4} && \frac{1+V}{4} && \frac{1+V}{4} && \frac{1-V}{4}\\
\frac{1-V}{4} && \frac{1+V}{4} && \frac{1+V}{4} && \frac{1-V}{4}\\
\frac{1-V}{4} && \frac{1+V}{4} && \frac{1+V}{4} && \frac{1-V}{4}\\
\frac{1+V}{4} && \frac{1-V}{4} && \frac{1-V}{4} && \frac{1+V}{4}\\
\end{pmatrix}, $ 
\end{center} 
which can be reproduced by performing the projective qubit measurements of the observables corresponding to the operators $A_0=\sigma_x$, $A_1=\sigma_y$; and $B_0=\sigma_x$, $B_1=\sigma_y$ on the two-qubit state given by,
\begin{equation}
\label{state2}
| \psi_1 \rangle =  \cos \theta |00 \rangle -  \frac{(1 + i) \sin \theta}{\sqrt{2}} |11 \rangle ,
\end{equation}
where, $0 \leq \theta \leq \frac{\pi}{4}$ with $\sin 2 \theta = \sqrt{2} V$.

\item For $\lambda=2$, it is given by
\begin{center}
$ P(a b |A_x B_y, \rho^{2}_{AB}) = \begin{pmatrix}
\frac{1-V}{4} && \frac{1+V}{4} && \frac{1+V}{4} && \frac{1-V}{4}\\
\frac{1+V}{4} && \frac{1-V}{4} && \frac{1-V}{4} && \frac{1+V}{4}\\
\frac{1+V}{4} && \frac{1-V}{4} && \frac{1-V}{4} && \frac{1+V}{4}\\
\frac{1+V}{4} && \frac{1-V}{4} && \frac{1-V}{4} && \frac{1+V}{4}\\
\end{pmatrix} ,$ 
\end{center}
which  can be reproduced by performing the projective qubit measurements of the observables corresponding to the operators $A_0=\sigma_x$, $A_1=\sigma_y$; and $B_0=\sigma_x$, $B_1=\sigma_y$ on the two-qubit state given by
\begin{equation}
\label{state3}
| \psi_2 \rangle =  \cos \theta |00 \rangle -  \frac{(1 - i) \sin \theta}{\sqrt{2}} |11 \rangle ,
\end{equation}
where, $0 \leq \theta \leq \frac{\pi}{4}$ with $\sin 2 \theta = \sqrt{2} V$.

\item For $\lambda=3$, it is given by
\begin{center}
$ P(a b |A_x B_y, \rho^{3}_{AB}) = \begin{pmatrix}
\frac{1+V}{4} && \frac{1-V}{4} && \frac{1-V}{4} && \frac{1+V}{4}\\
\frac{1-V}{4} && \frac{1+V}{4} && \frac{1+V}{4} && \frac{1-V}{4}\\
\frac{1-V}{4} && \frac{1+V}{4} && \frac{1+V}{4} && \frac{1-V}{4}\\
\frac{1-V}{4} && \frac{1+V}{4} && \frac{1+V}{4} && \frac{1-V}{4}\\
\end{pmatrix} ,$ 
\end{center}
which can be reproduced by performing the projective qubit measurements of the observables corresponding to the operators $A_0=\sigma_x$, $A_1=\sigma_y$; and $B_0=\sigma_x$, $B_1=\sigma_y$ on the two-qubit state given by
\begin{equation}
\label{state4}
| \psi_3 \rangle = \cos \theta |00 \rangle +  \frac{(1 - i) \sin \theta}{\sqrt{2}} |11 \rangle ,
\end{equation}
where, $0 \leq \theta \leq \frac{\pi}{4}$ with $\sin 2 \theta = \sqrt{2} V$.
\end{enumerate}
Note that $|\sin 2 \theta | \leq 1$ (as $0 \leq \theta \leq \frac{\pi}{4}$), which implies that $V \leq \frac{1}{\sqrt{2}}$.  It can be easily checked that the aforementioned observables corresponding to the operators $A_0=\sigma_x$, $A_1=\sigma_y$; and $B_0=\sigma_x$, $B_1=\sigma_y$ used to reproduce the joint probability distributions at Alice and Bob's side can demonstrate  EPR steering without Bell nonlocality of the singlet state given by, $| \psi^{-} \rangle = \frac{1}{\sqrt{2}} ( | 01 \rangle - | 10 \rangle )$. Hence, the Mermin family given by Eq.(\ref{mfameq}) has a steering LHS-LHV model as in Eq.(\ref{NLHS})  in the range $0 < V \le\frac{1}{\sqrt{2}}$ in Scenario \ref{csms}.\\

In the steering LHS-LHV model given for the Mermim family as in Eq.(\ref{lhvlhs}), 
 the bipartite distributions   $P(a b |A_x B_y, \rho^{\lambda}_{AB})$ belong 
 to the CHSH family up to local reversible operations \cite{lro},
\begin{equation}
\label{chsh}
P_{CHSH}(ab|A_xB_y)       =       \frac{2+(-1)^{a\oplus       b\oplus
    xy}\sqrt{2}W}{8} ,
\end{equation}
where $W=\sqrt{2}V$ is a real number such that $0<W\le 1$ and $0 < V \leq \frac{1}{\sqrt{2}}$.
In Ref. \cite{Koon},
it has been that the CHSH family certifies two-qubit entanglement iff $W>\frac{1}{2}$.
This implies that for $W\le \frac{1}{2}$, it can be reproduced by a two-qubit separable
state. Therefore, the bipartite distributions   $P(a b |A_x B_y, \rho^{\lambda}_{AB})$
in Eq. (\ref{lhvlhs}) has a LHS-LHS decomposition for $V\le\frac{1}{2\sqrt{2}}$.
This implies that the Mermin family can be reproduced by a fully LHS-LHV model,
\be
P_{MF}^{V} (abc|A_x B_ y C_z)=\sum_\lambda q_\lambda P(a|A_x,\rho_A^\lambda)P(b|B_y,\rho_B^\lambda)P_\lambda(c|C_z), \label{ALHS3}
\ee
for $V \leq \frac{1}{2\sqrt{2}}$ in Scenario \ref{csms}.
Here, $P(a|A_x,\rho_A^\lambda)$ 
and $P(b|B_y,\rho_B^\lambda)$ are the distributions arising from the local hidden states 
$\rho_A^\lambda$ and $\rho_B^\lambda$ which are in  $\mathbb{C}^{2}$, respectively.

In Ref. \cite{UFNL}, it has been shown that violation of the following inequality (Eq. ($21$) in \cite{UFNL} with $N=3$ and $T=2$),
\ba \label{eqappn2}
\braket{M}_{2\times 2 \times ?} \overset{\mathrm{LHS}}{\le} \sqrt{2},
\ea
detects non-existence of fully LHS-LHV model in Scenario \ref{csms}. Here $M$ is the Mermin operator given in the Mermin inequality (\ref{MI0}),
$2\times 2 \times ?$ indicates that Alice and Bob perform qubit measurements
while Charlie performs black-box measurements. Note that the Mermin family  
violates the above inequality for $V>\frac{1}{2\sqrt{2}}$. Thus, the Mermin family does not have fully LHS-LHV model in the region $V>\frac{1}{2\sqrt{2}}$ in Scenario \ref{csms}. Hence, we can conclude that the Mermin family has fully LHS-LHV model iff $0 < V \leq \frac{1}{2\sqrt{2}}$ in Scenario \ref{csms}.


\begin{thebibliography}{99}
\bibitem{guhne} O. Guhne, and G. Toth, \emph{Entanglement detection}, \href{https://www.sciencedirect.com/science/article/pii/S0370157309000623}{Physics Reports \textbf{474}, 1 (2009).}
\bibitem{Sor} A. S. Sorensen, and K. Molmer, \emph{Entanglement and Extreme Spin Squeezing}, \href{https://journals.aps.org/prl/abstract/10.1103/PhysRevLett.86.4431}{Phys. Rev. Lett. \textbf{86}, 4431 (2001).}
\bibitem{Hyl} P. Hyllus, W. Laskowski, R. Krischek, C. Schwemmer, W. Wieczorek, H. Weinfurter, L. Pezze, and A. Smerzi, \emph{Fisher information and multiparticle entanglement}, \href{https://journals.aps.org/pra/abstract/10.1103/PhysRevA.85.022321}{Phys. Rev. A \textbf{85}, 022321 (2012).}
\bibitem{toth} G. Toth, \emph{Multipartite entanglement and high-precision metrology}, \href{https://journals.aps.org/pra/abstract/10.1103/PhysRevA.85.022322}{Phys. Rev. A \textbf{85}, 022322 (2012).}
\bibitem{bell} J. S. Bell, \emph{On the Einstein-Podolsky-Rosen paradox}, \href{https://cds.cern.ch/record/111654/files/vol1p195-200_001.pdf}{Physics \textbf{1}, 195 (1965)}.
\bibitem{SI} G. Svetlichny, \emph{Distinguishing three-body from two-body nonseparability by a Bell-type inequality}, \href{https://journals.aps.org/prd/abstract/10.1103/PhysRevD.35.3066}{Phys. Rev. D \textbf{35}, 3066 (1987).}
\bibitem{Nag} K. Nagata, M. Koashi, and N. Imoto, \emph{Configuration of Separability and Tests for Multipartite Entanglement in Bell-Type Experiments}, \href{https://journals.aps.org/prl/abstract/10.1103/PhysRevLett.89.260401}{Phys. Rev. Lett. \textbf{89}, 260401 (2002).}
\bibitem{mb1} D. Collins, N. Gisin, S. Popescu, D. Roberts, and V. Scarani, \emph{Bell-Type Inequalities to Detect True $n$-Body Nonseparability}, \href{https://journals.aps.org/prl/abstract/10.1103/PhysRevLett.88.170405}{Phys. Rev. Lett. \textbf{88}, 170405 (2002).}
\bibitem{mb2} M. Seevinck, and G. Svetlichny, \emph{Bell-Type Inequalities for Partial Separability in $N$-Particle Systems and Quantum Mechanical Violations}, \href{https://journals.aps.org/prl/abstract/10.1103/PhysRevLett.89.060401}{Phys. Rev. Lett. \textbf{89}, 060401(2002).}
\bibitem{B} J.-D. Bancal, J. Barrett, N. Gisin, and S. Pironio, \emph{Definitions of multipartite nonlocality}, \href{https://journals.aps.org/pra/abstract/10.1103/PhysRevA.88.014102}{Phys. Rev. A \textbf{88}, 014102 (2013).}
\bibitem{EPR} A. Einstein, B. Podolsky, and N. Rosen, \emph{Can Quantum-Mechanical Description of Physical Reality Be Considered Complete?}, \href{https://journals.aps.org/pr/abstract/10.1103/PhysRev.47.777}{Phys. Rev. \textbf{47}, 777 (1935).}
\bibitem{mb3} D. Saha, and M. Pawlowski, \emph{Structure of quantum and broadcasting nonlocal correlations}, \href{https://journals.aps.org/pra/abstract/10.1103/PhysRevA.92.062129}{Phys. Rev. A \textbf{92}, 062129 (2015).}
\bibitem{scro} E. Schrodinger, Proc. Cambridge Philos. Soc. {\bf 31}, 553 (1935); {\bf 32}, 446 (1936).
\bibitem{steer} H. M. Wiseman, S. J. Jones, and A. C. Doherty, \emph{Steering, Entanglement, Nonlocality, and the Einstein-Podolsky-Rosen Paradox}, \href{https://journals.aps.org/prl/abstract/10.1103/PhysRevLett.98.140402}{Phys. Rev. Lett. \textbf{98}, 140402 (2007).}
\bibitem{ZHC16} H. Zhu, M. Hayashi, and L. Chen,  \emph{Universal Steering Criteria}, \href{https://journals.aps.org/prl/abstract/10.1103/PhysRevLett.116.070403}{Phys. Rev. Lett. \textbf{116}, 070403 (2016).}
\bibitem{CHS+69} J. F. Clauser, M. A. Horne, A. Shimony, and R. A. Holt, \emph{Proposed Experiment to Test Local Hidden-Variable Theories}, \href{https://journals.aps.org/prl/abstract/10.1103/PhysRevLett.23.880}{Phys. Rev. Lett. {\bf 23}, 880 (1969).}
\bibitem{UFNL} E. G. Cavalcanti, Q. Y. He, M. D. Reid, and H. M. Wiseman, \emph{Unified criteria for multipartite quantum nonlocality}, \href{https://journals.aps.org/pra/abstract/10.1103/PhysRevA.84.032115}{Phys. Rev. A \textbf{84}, 032115 (2011).}
\bibitem{stm2} Q. Y. He, P. D. Drummond, and M. D. Reid, \emph{Entanglement, EPR steering, and Bell-nonlocality criteria for multipartite higher-spin systems}, \href{https://journals.aps.org/pra/abstract/10.1103/PhysRevA.83.032120}{Phys. Rev. A \textbf{83}, 032120 (2011).}
\bibitem{stm3} Q. Y. He, and M. D. Reid, \emph{Genuine Multipartite Einstein-Podolsky-Rosen Steering}, \href{https://journals.aps.org/prl/abstract/10.1103/PhysRevLett.111.250403}{Phys. Rev. Lett. \textbf{111}, 250403 (2013).}
\bibitem{stm4} C.-M. Li, K. Chen, Y.-N. Chen, Q. Zhang, Y.-A. Chen, and J.-W. Pan, \emph{Genuine High-Order Einstein-Podolsky-Rosen Steering}, \href{https://journals.aps.org/prl/abstract/10.1103/PhysRevLett.115.010402}{Phys. Rev. Lett. \textbf{115}, 010402 (2015).}
\bibitem{cava} D. Cavalcanti, P. Skrzypczyk, G. H. Aguilar, R. V. Nery, P. S. Ribeiro, and S. P. Walborn, \emph{Detection of entanglement in asymmetric quantum networks and multipartite quantum steering}, \href{https://www.nature.com/articles/ncomms8941}{Nat. Commun. \textbf{6}, 7941 (2015).}
\bibitem{stm6} A. Mattar, P. Skrzypczyk, G. H. Aguilar, R. V. Nery, P. H. Souto Ribeiro, S. P. Walborn and D. Cavalcanti, \emph{Experimental multipartite entanglement and randomness certification of the W state in the quantum steering scenario}, \href{http://iopscience.iop.org/article/10.1088/2058-9565/aa629b/meta}{Quantum Sci. Technol. \textbf{2}, 015011 (2017).}
\bibitem{stm7} D. Cavalcanti, and P. Skrzypczyk, \emph{Quantum steering: a review with focus on semidefinite programming}, \href{http://iopscience.iop.org/article/10.1088/1361-6633/80/2/024001/meta}{Rep. Prog. Phys. \textbf{80}, 024001 (2017).}
\bibitem{jeba} C. Jebaratnam, \emph{Detecting genuine multipartite entanglement in steering scenarios}, \href{https://journals.aps.org/pra/abstract/10.1103/PhysRevA.93.052311}{Phys. Rev. A \textbf{93}, 052311 (2016).}
\bibitem{mermin} N. D. Mermin, \emph{Extreme quantum entanglement in a superposition of macroscopically distinct states}, \href{https://journals.aps.org/prl/abstract/10.1103/PhysRevLett.65.1838}{Phys. Rev. Lett. \textbf{65}, 1838 (1990).}
\bibitem{GHZ} D. M. Greenberger, M. A. Horne, and A. Zeilinger, \textit{Bell’s Theorem, Quantum Theory, and Conceptions of the Universe}, (1989).
\bibitem{mermin2} N. D. Mermin, \emph{Simple unified form for the major no-hidden-variables theorems}, \href{https://journals.aps.org/prl/abstract/10.1103/PhysRevLett.65.3373}{Phys. Rev. Lett. \textbf{65}, 3373 (1990).}
\bibitem{DQKD} A. Acin, N. Gisin, and L. Masanes, \emph{From Bell’s Theorem to Secure Quantum Key Distribution}, \href{https://journals.aps.org/prl/abstract/10.1103/PhysRevLett.97.120405}{Phys. Rev. Lett. \textbf{97}, 120405 (2006).}
\bibitem{Guhne} O. Guhne, and M. Seevinck, \emph{Separability criteria for genuine multiparticle entanglement}, \href{http://iopscience.iop.org/article/10.1088/1367-2630/12/5/053002/meta}{New J. Phys. \textbf{12}, 053002 (2010).}
\bibitem{Bancaletal} J.-D. Bancal, N. Brunner, N. Gisin, and Y.-C. Liang, \emph{Detecting Genuine Multipartite Quantum Nonlocality: A Simple Approach and Generalization to Arbitrary Dimensions}, \href{https://journals.aps.org/prl/abstract/10.1103/PhysRevLett.106.020405}{Phys. Rev. Lett. \textbf{106}, 020405 (2011).}
\bibitem{lro} J. Barrett, N. Linden, S. Massar, S. Pironio, S. Popescu, and D. Roberts, \emph{Nonlocal correlations as an information-theoretic resource}, \href{https://journals.aps.org/pra/abstract/10.1103/PhysRevA.71.022101}{Phys. Rev. A \textbf{71}, 022101 (2005).}
\bibitem{Koon} K. T. Goh, J-D. Bancal, and V. Scarani, \emph{Measurement-device-independent quantification of entanglement for given Hilbert space dimension}, \href{http://iopscience.iop.org/article/10.1088/1367-2630/18/4/045022}{New J. Phys. \textbf{18}, 045022 (2016).}
\bibitem{JGA17} C. Jebaratnam, D. Das, S. Goswami, R. Srikanth, A. S. Majumdar, \emph{Operational nonclassicality of local multipartite correlations in the limited-dimensional simulation scenario}, \href{http://iopscience.iop.org/article/10.1088/1751-8121/aad1fd/meta}{J. Phys. A: Math. Theor. in press https://doi.org/10.1088/1751-8121/aad1fd (2018).}
\bibitem{SU01}M. Seevinck and J. Uffink,  \emph{Sufficient conditions for three-particle entanglement and their tests in recent experiments}, \href{https://journals.aps.org/pra/abstract/10.1103/PhysRevA.65.012107}{Phys. Rev. A 65, 012107 (2001).}
\bibitem{DJB+17} D. Das, C. Jebaratnam, B. Bhattacharya, A. Mukherjee, S. S. Bhattacharya, A. Roy, \emph{Characterization of the quantumness of unsteerable tripartite correlations}, \href{https://arxiv.org/abs/1706.08415}{arXiv:1706.08415 [quant-ph] (2017).}
\bibitem{allmult} R. F. Werner, and M. M. Wolf, \emph{All-multipartite Bell-correlation inequalities for two dichotomic observables per site}, \href{https://journals.aps.org/pra/abstract/10.1103/PhysRevA.64.032112}{Phys. Rev. A \textbf{64}, 032112 (2001).}
\bibitem{TGS05} G. Toth, O. Guhne, M. Seevinck, J. Uffink, \emph{Addendum to ``Sufficient conditions for three-particle entanglement and their tests in recent experiments"}, \href{https://journals.aps.org/pra/abstract/10.1103/PhysRevA.72.014101}{Phys. Rev. A \textbf{72}, 014101 (2005).}
\bibitem{KZ96} G. Krenn and A. Zeilinger, \emph{Entangled entanglement}, \href{https://journals.aps.org/pra/abstract/10.1103/PhysRevA.54.1793}{Phys. Rev. A \textbf{54}, 1793  (1996).}



\end{thebibliography}
\end{document}